\newif\ifsubmit
\newcommand{\colorpar}[3]{\colorbox{#1}{\parbox{#2}{#3}}}
\newcommand{\marginremark}[3]{\marginpar{\colorpar{#2}{4em}{\color{#1}#3}}}
  \newcommand{\rs}[1]{}
  \newcommand{\ms}[1]{}
  \newcommand{\mlz}[1]{}
  \newcommand{\rs}[1]{\marginremark{black}{yellow}{\tiny{[RS]~ #1}}}
  \newcommand{\ms}[1]{\marginremark{white}{blue}{\tiny{[MS]~#1}}}
  \newcommand{\mlz}[1]{\marginremark{purple}{white}{\tiny{[MLZ]~ #1}}}
\newcommand{\eg}{e.g., }
\algrenewcommand\algorithmicrequire{\textbf{Input:}}
\algrenewcommand\algorithmicensure{\textbf{Output:}}
\begin{document}

\title{Attack-Defense Trees with Offensive and Defensive Attributes (with Appendix)\\
}

\author{\IEEEauthorblockN{1\textsuperscript{st} Danut-Valentin Copae}
\IEEEauthorblockA{\textit{University of Twente} \\
d.v.copae@student.utwente.nl}
\and
\IEEEauthorblockN{2\textsuperscript{nd} Reza Soltani}
\IEEEauthorblockA{\textit{University of Twente} \\
r.soltani@utwente.nl}
\and
\IEEEauthorblockN{3\textsuperscript{rd} Milan Lopuhaä-Zwakenberg}
\IEEEauthorblockA{\textit{University of Twente} \\
m.a.lopuhaa@utwente.nl}
}

\maketitle

\begin{abstract}
  Effective risk management in cybersecurity requires a thorough understanding of the interplay between attacker capabilities and defense strategies. 
  Attack-Defense Trees (ADTs) are a commonly used methodology for representing this interplay; however, previous work in this domain has only focused on analyzing metrics such as cost, damage, or time from the perspective of the attacker. This approach provides an incomplete view of the system, as it neglects to model defender attributes: in real-world scenarios, defenders have finite resources for countermeasures and are similarly constrained. In this paper, we propose a novel framework that incorporates defense metrics into ADTs, and we present efficient algorithms for computing the Pareto front between defense and attack metrics.  
  Our methods encode both attacker and defender metrics as semirings, allowing our methods to be used to many metrics such as cost, damage, and skill. We analyze tree-structured ADTs using a bottom-up approach and general ADTs by translating them to binary decision diagrams. Experiments on randomly generated ADTS demonstrate that both approaches effectively handle ADTs with several hundred nodes. 
\end{abstract}

\begin{IEEEkeywords}
attack trees, attack-defense trees, Pareto front, multi-criteria optimization
\end{IEEEkeywords}

\section{Introduction}

\noindent \textbf{Attack trees.}  Cyber-physical systems, such as autonomous vehicle networks or smart grids, can become notoriously complex when multiple actors are involved. 
The resulting complexity also raises the number of possible breaches that attackers can exploit, especially in systems where components rely on each other’s functioning to maintain safety. 
Consequently, there is a need for robust and systematic threat modeling systems that can cope with such attacks.
In 1999, Schneier \cite{AT} introduced attack trees (ATs), which nowadays represent one of the most prominent tools for evaluating the security of complex systems. 
Due to their simplicity and compact form, ATs are commonly used in commercial software tools as well as industrial applications, \eg analyzing the security of a SCADA system for a tank and pump facility \cite{Tanu2010AnEO} and impact analysis of electric grid feature scenarios \cite{NESCOR2015}.

The hierarchical structure of an attack tree models the root of the tree as the attacker's primary goal. 
The tree branches represent different methods the attacker could take to achieve their primary goal. 
The leaves of these branches are basic attack steps (BASs), which cannot be further refined into finer sub-goals. 
Figure \ref{fig:AT} shows an exemplary AT.
This AT includes \texttt{AND} gates, activated when all of its children are enabled, and \texttt{OR} gates, activated when only a single child is enabled.

\begin{figure}
    \centering
    \includegraphics[width=0.75\linewidth]{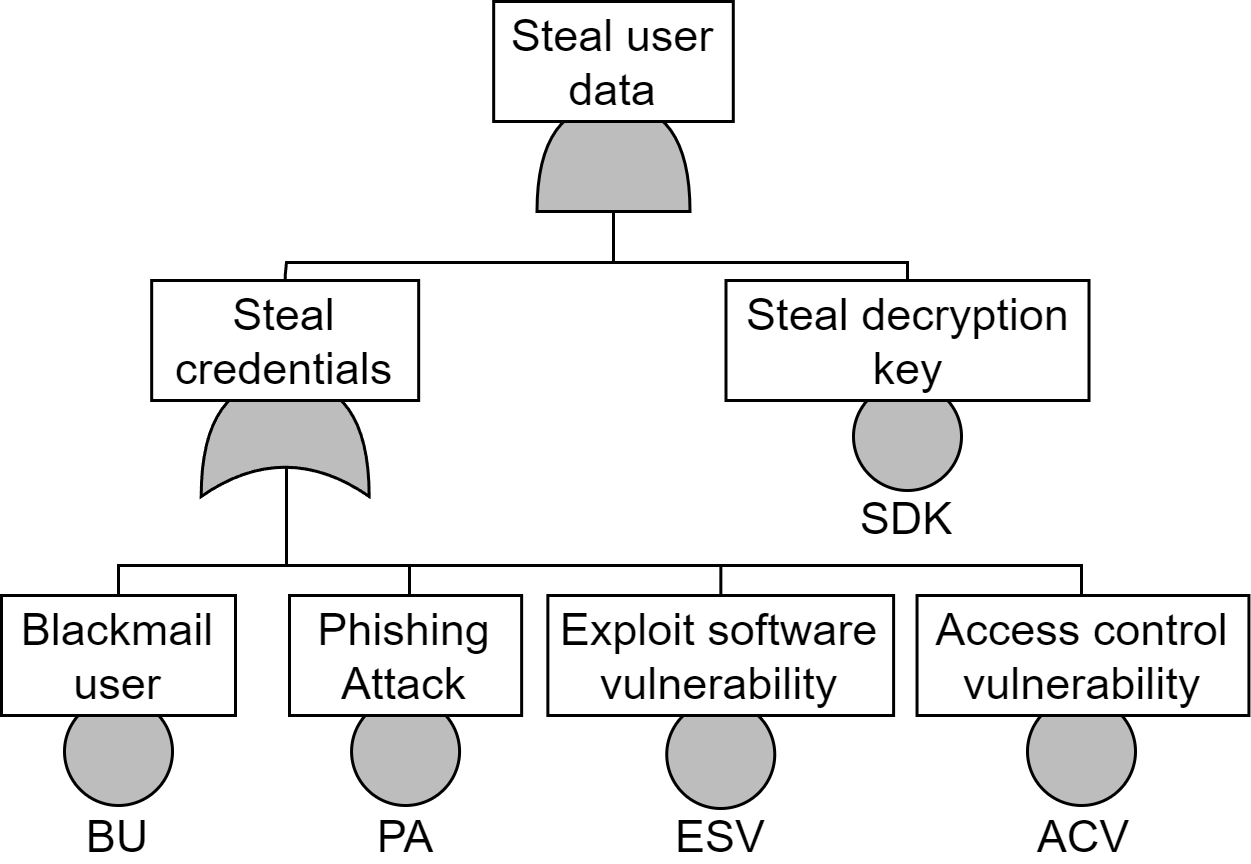}
    \caption{An AT depicting how the attacker can steal user data. To obtain the user's data, the attacker must obtain both credentials and the decryption key. The credentials can be stolen in four different ways: blackmailing the user ($BU$), conducting a phishing attack ($PA$), exploiting a software vulnerability ($ESV$), or leveraging access control vulnerabilities ($ACV$).}
    \label{fig:AT}
\end{figure}

\noindent \textbf{Atttack-defense trees.} The primary utility of ATs is to describe the various strategies an attacker can take to compromise a system through a structured decomposition of the attack into smaller objectives. 
This enables security experts to design countermeasures for preventing future attacks. 
However, one of the limitations of attack trees is that they do not account for the countermeasures implemented to prevent an attack. 
For this reason, attack-defense trees (ADTs) were introduced by Kordy et al. \cite{Barbara2011} as an extension of regular ATs to model the attacks on a system concurrently and the defenses to block those attacks. 
The defenses work by deactivating the attack nodes they are associated with, thereby disabling them. 
Figure \ref{fig:ADT} extends Fig. \ref{fig:AT} by adding counter-attack (defense) nodes. 
The basic defense steps (BDSs) are highlighted in green for better visualization. 
A defense node meets an attack node at an \texttt{INH} (inhibition) gate. 
This gate has exactly two children of opposite types, and one acts as an inhibitor. 
The \texttt{INH} gate is deactivated in the presence of an active inhibitor. On the contrary, its output equals its input when the inhibitor is not activated. 
For clarity, the edge leading to the inhibitor child of \texttt{INH} gates is marked with a small circle.

\begin{figure}
    \centering
    \includegraphics[width=0.7\linewidth]{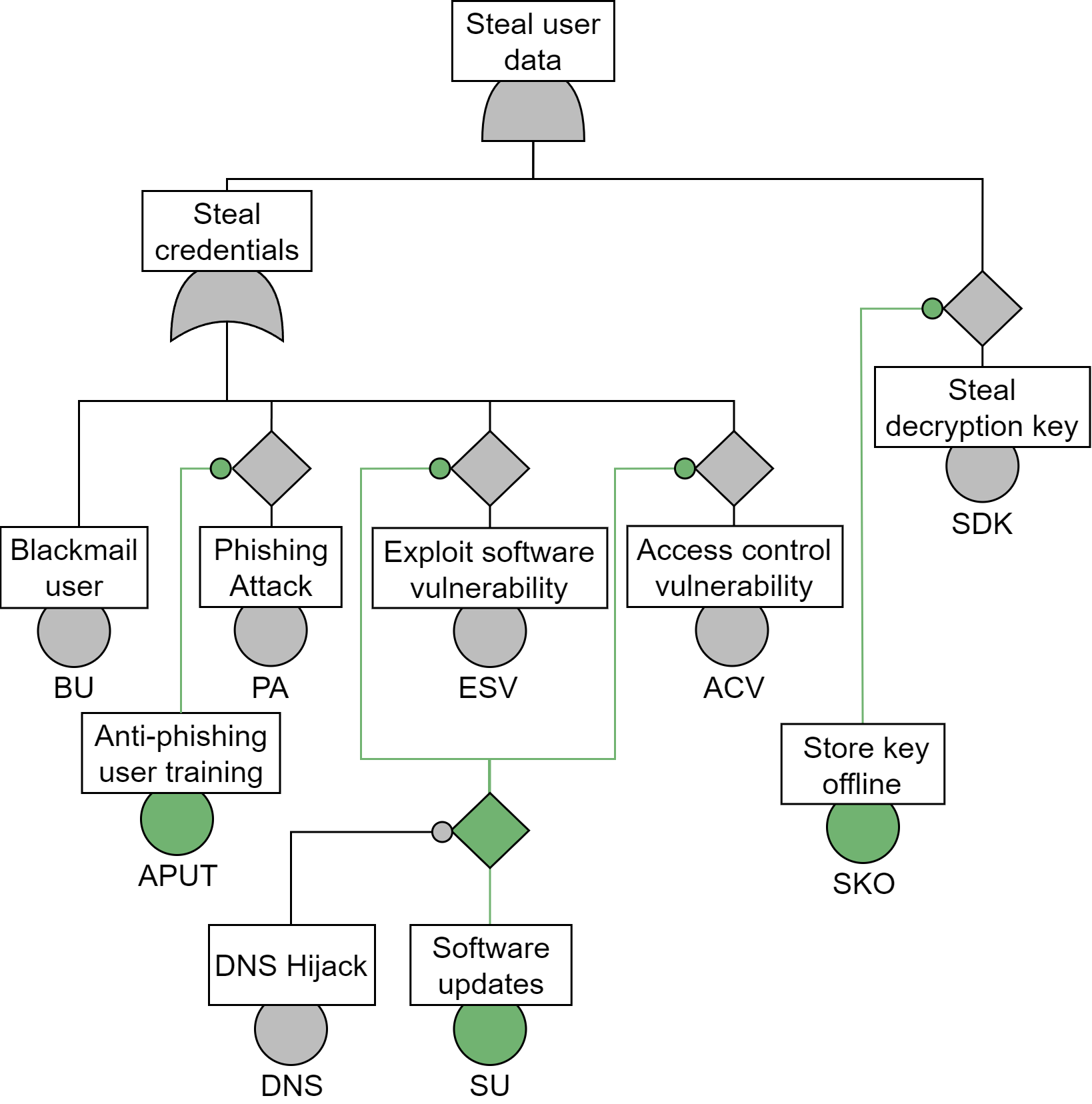}
    \caption{Attack-defense tree extending the attack tree of Fig.~\ref{fig:AT}. The defender can prevent phishing attacks ($PA$) through anti-phishing user training ($APUT$), and $SDK$ through $SKO$. Regular software updates ($SU$) prevent both $ESV$ and $ACV$. DNS Hijack ($DNS$), which does not directly contribute to reaching the top node, disables the $SU$ defense. Lastly, blackmailing the user ($BU$) has no countermeasure.}
    \label{fig:ADT}
\end{figure}

\noindent \textbf{Quantitative analysis.} 
Quantitative analysis in ADTs enables the evaluation of different strategies by assigning measurable values to actions. 
Common metrics, such as cost and probabilities, help assess the risks of attack paths and inform defensive planning.
In the current state of the attack-defense trees research, although there are two actors in an attack scenario (i.e., an attacker and a defender), only the attacker's actions are annotated with quantifiable attribute values \cite{Kordy2018,adt_pareto_domains}.
This approach fails to fully capture reality, as the defender, like the attacker, usually has finite resources.
The defender aims to select the most optimal defenses, typically those with a lower defense cost, while simultaneously making the situation more difficult for the attacker (e.g., maximizing the attacker cost). Note that the defender's and attacker's goals are conflicting: the attacker seeks to minimize their cost, whereas the defender aims to maximize attack costs with minimum defense cost. 
The defender's optimal strategy aims to reduce their own defense costs while simultaneously maximizing the difficulty for the attacker, whether in terms of cost, effort, or time. 
Also, the defender's choice are constrained by a budget.
The set of maximal achievable attacker costs or efforts for each possible defender budget forms the \emph{Pareto front}.
This leads to our research goal. 

\vspace{1.5mm}
\begin{highlighted}
\noindent \textbf{Research Goal:} 
Find efficient algorithms that compute the Pareto front between attacker and defender metrics for Attack-Defense Trees.
\end{highlighted}
\vspace{1.5mm}

By leveraging the Pareto front, defenders can make better-informed decisions to allocate resources efficiently. This enables security analysts to evaluate trade-offs between attacker and defender strategies more comprehensively, aiding in more informed and cost-effective decision-making for real-world system protection.

Most approaches, which we discuss in detail in the next section, focus on single-parameter optimization or attacker-centric metrics, leaving defender attributes and multi-objective optimization underexplored. 
Some researchers explored multi-objective and Pareto-efficient solutions for ADTs \cite{FilaWidel,Aslanyan}. Nevertheless, these works have limitations. For instance, the work of Fila and Widel \cite{FilaWidel} focuses on multi-parameter optimization within ADTs but only considers metrics that are simultaneously applicable to both attackers and defenders, such as cost or time.
Moreover, their approach fixes the metric values for one of the parties at $\infty$, thus essentially only analyzing the metrics for one party.
Similarly, the approach of Aslanyan and Nielson \cite{Aslanyan} primarily addresses Pareto fronts between attacker-centric metrics 
without extending to defender-specific attributes or the interplay between attacker and defender objectives.

In this paper, we address the limitations of existing ADT approaches by introducing a comprehensive framework that is built on formal definitions of ADTs, including their syntax and semantics, which enable the representation of attacker and defender attributes through the use of semirings. For tree-structured ADTs, we propose an efficient \texttt{Bottom-Up} (\texttt{BU}) algorithm that processes nodes iteratively from the leaves to the root, aggregating metrics to compute the Pareto front of attack and defense strategies. For more general ADTs, which include directed acyclic graph (DAG) structures, we develop a Binary Decision Diagram (BDD)-based approach to capture complex relationships between nodes and ensure scalability. Both methods were rigorously evaluated using a test suite of randomly generated ADTs with sizes up to 325 nodes, demonstrating their practicality and effectiveness. Our main contributions are:
\begin{itemize}
    \item Formal definitions of ADTs with attacker and defender attribute domains, including a semiring-based representation of metrics;
    \item A \texttt{Bottom-Up} algorithm for computing the Pareto front in tree-structured ADTs;
    \item A Binary Decision Diagram algorithm for efficiently handling the Pareto front computation for ADTs with DAG structures;
    \item Experimental validation showcasing the performance of the algorithms on large-scale ADTs.
\end{itemize}

\noindent \textbf{Paper organization:}
The paper is organized as follows. 
Section \ref{RelatedW} outlines the background of ADTs and highlights existing research gaps.
In Section \ref{ADT_formalism}, we introduce the ADT formalism, detailing its semantics, metrics, and Pareto analysis.
Section \ref{PF_tree} focuses on computing the Pareto front for tree-structured ADTs, while Section \ref{PF_DAG} extends this discussion to DAG-structured ADTs.
Finally, Section \ref{Experiments} presents experimental results and evaluates the performance of the proposed Pareto front computation methods.
Finally, Section \ref{Conclusion} concludes the paper and provides future work.

\section{Related work}\label{RelatedW}

Attack-Defense Tree (ADT) is a concept introduced by Kordy et al. \cite{Barbara2011}, which gives the system the possibility of modeling defenses through \textit{counter-attack} gates. Before this work, there had already been defined ideologies of ADTs, but they were tailored to more specific use cases. For instance, in \cite{Bistarelli2006}, defensive actions are only possible at the leaf level. Similarly, in \cite{Roy2012}, the defender can only perform counter-attacks at the leaf level but cannot have higher-level goals modeled in the tree. In their paper, Kordy et al.\cite{Barbara2011} provided a more general concept of ADTs, where attackers and defenders have equal capabilities, and counter-attacks can be modeled at intermediate nodes, including the root node. This approach offers a more comprehensive overview of the security aspects of a system.

Traditional ADT research has primarily focused on attacker-centric metrics. However, some frameworks demonstrated the value of integrating both attack and defense perspectives to better capture system dynamics \cite{afdt_paper}.
Arias et al. in \cite{hackers-vs-security} analyzed ADTs in a novel way by treating these trees as an extension of asynchronous multi-agent systems. Each node in the tree is treated as an agent that can act asynchronously. The transition functions of these nodes are then equipped with attributes. Finally, the quantitative results from the generated automata are verified with state-of-the-art model checkers such as \texttt{UPPAAL} and \texttt{Imitator}. Similarly, some methods for optimizing spare management \cite{spare_management_paper} and dynamically modeling environmental behavior for energy efficiency \cite{energy_optimization_paper,CPS} highlight the importance of considering multiple attribute domains in system modeling. Nevertheless, most analytical methods optimize one parameter at a time, such as the cost or time of an attack. However, this approach might not accurately represent complex real-world scenarios where parameters can interact (e.g., the maximum damage of an attack, given a fixed cost \cite{Milan2023_cost_damage}), potentially leading to potentially sub-optimal solutions. To analyze multiple parameters simultaneously, the leaf nodes need to be annotated with multiple values. This creates a multi-optimization problem, as there is no single solution anymore, but a set of Pareto efficient solutions called the \textit{Pareto Front}, where another does not dominate one solution in a given ordering relation \cite{adt_pareto_domains}. 
For instance, the main intuition behind this ordering, assuming the Pareto front between the attacker's damage and cost, is that if the attacker has two strategies with the same damage, but one has a lower cost, they have no incentive to choose the higher-cost strategy. 

Efforts to optimize multiple parameters in ADTs have been studied in the literature \cite{FilaWidel, Aslanyan}. The authors in \cite{FilaWidel} propose a framework for analyzing Pareto fronts in ADTs by considering multi-parameter optimization; however, their approach is limited to metrics that apply simultaneously to both attacker and defender actions, such as shared costs or time requirements. This limitation restricts its applicability to scenarios where defender-specific strategies, such as investing additional resources to impede attackers, need to be considered. Aslanyan and Nielson \cite{Aslanyan} extend ADTs to model attacker and defender interactions using a type system and propose techniques for computing Pareto-efficient solutions. Their work focuses on attacker-specific metrics, such as the cost and probability of attacks, but does not incorporate defenders' distinct metrics or strategic goals.

Our work bridges this gap by introducing a formal framework for analyzing the interplay between attacker and defender metrics in ADTs, leveraging efficient algorithms to compute the Pareto front across these metrics. Unlike prior works, our framework supports the interplay and analysis of attacker and defender strategies (metrics), offering a comprehensive view of trade-offs in security planning.

\section{Attack-Defense Tree formalism} \label{ADT_formalism}

In this section, we introduce the formalism of the ADT, which is a structured approach to visualize and analyze the interactions between potential attacks and defensive mechanisms.
We explore ADT aspects through three key dimensions: Semantics, Metrics, and the concept of the Pareto Front.

\subsection{Semantics}

An Attack-Defense Tree (ADT) is a structured diagram that models potential security threats and the defensive actions that can counter them.
ADT semantics define the rules for interpreting an ADT, specifying how attacks and defenses interact within the structure. Our definition mostly follows \cite{Kordy2014}, though we take a graph-theoretic approach rather than a grammar approach. Like standard ATs, we have $\mathtt{AND}$- and $\mathtt{OR}$-gates. Furthermore, the effects of countermeasures on incoming attacks are modeled by Inhibition gates ($\mathtt{INH}$); these correspond to the $\mathtt{C}$-gates of \cite{Kordy2014}. Inhibition gates $v$ have two inputs: an \emph{trigger} $\bar{\vartheta}(v)$ that can stop propagation, and an \emph{inhibited} $\theta(v)$ that is required for propagation. Finally, each gate $v$ is assigned an agent $\tau(v)$, either attacker ($\mathtt{A}$) or defender ($\mathtt{D}$). Any gate type can be assigned to both agents; hence, it is also possible for defender-held inhibition gates to be inhibited by further attacker actions.

\begin{definition}[Attack-Defense Tree]\label{def:attack-defense-tree}
    An attack-defense tree is a quintuple $T = (N, E, \gamma, \tau, \vartheta)$, where $(N,E)$ is a rooted directed acyclic graph; $\gamma$ and $\tau$ are functions $\gamma\colon N \to \{\texttt{BS}, \texttt{AND}, \texttt{OR}, \texttt{INH}\},\tau: N \to \{\texttt{A}, \texttt{D}\}$; and $\vartheta$ is a function \( \vartheta: \{ v \in N \mid \gamma(v) = \texttt{INH} \} \to N \) such that \( \langle v, \vartheta(v) \rangle \in E \). Moreover, $T$ satisfies the following constraints for a node $v \in N$:
\begin{itemize}
    \item $\gamma(v) = \texttt{BS} $ if and only if $v$ is a leaf of $(N,E)$.
    \item If \( \gamma(v) = \texttt{INH} \), then $v$ has two children with different $\tau$-values, with $\tau(\bar{\vartheta}(v)) \neq \tau(\theta(v))$.
    \item if $\gamma(v) \in \{\texttt{OR}, \texttt{AND}\}$, then for all children $w$ of $v$, $\tau(w) = \tau(v)$.
\end{itemize}
\end{definition}

The root of \( T \) is denoted as \( R_T \), and the set of children of a node \( v \) as \( ch(v) = \{ w \in N \mid (v, w) \in E \} \). The set of Basic Attack Steps (\textit{BASs}) on \( T \), denoted \( \mathcal{A} \), is the set of all nodes \( v \in N \) for which \( \gamma(v) = \texttt{BS} \) and \( \tau(v) = \texttt{A} \). 
Similarly we write $\mathcal{D}\label{symbol:D}$ for the set of Basic Defense Steps (\textit{BDSs}), i.e., nodes $v$ for which $\gamma(v) = \texttt{BS}$ and $\tau(v) = \texttt{D}$. These two sets are disjoint, that is, $\mathcal{A} \cap \mathcal{D} = \varnothing$, and the union of these two sets represents the set of all basic events.
If $\gamma(v) = \mathtt{INH}$, we write $\bar{\vartheta}(v)$ for the trigger child, i.e., $ch(v) = \{\bar\vartheta(v),\theta(v)\}$. 

In an attack-defense tree, both the defender and the attacker choose a set of BDS/BAS to activate. We represent these sets as binary vectors.

\begin{definition}[Event]
An \emph{attack vector} is a binary vector $\vec{\alpha} \in \mathbb{B}^{\alpha}$. A \emph{defense vector} is a binary vector $\vec{\delta} \in \mathbb{B}^{\mathcal{D}}$. An \emph{event} is a pair $(\vec{\delta},\vec{\alpha})$ of a defense vector and an attack vector.
\end{definition}

Figure \ref{fig:attack_tree_example} is an example of a tree-structured ADT annotated with numbers representing the cost. In this illustration, the set of all attacks is $\mathcal{A} = \{a_1, a_2, a_3\}$, and the set of defenses is $\mathcal{D} = \{d_1, d_2\}$. If the attacker activates $a_2$ and $a_3$, but not $a_1$, this forms the attack vector $\vec{\alpha} = 011$. Similarly, a defensive vector where only $d_1$ is activated is represented by $\vec{\delta} = 10$. 

\begin{figure}
  \begin{center}
    \includegraphics[width=0.25\textwidth]{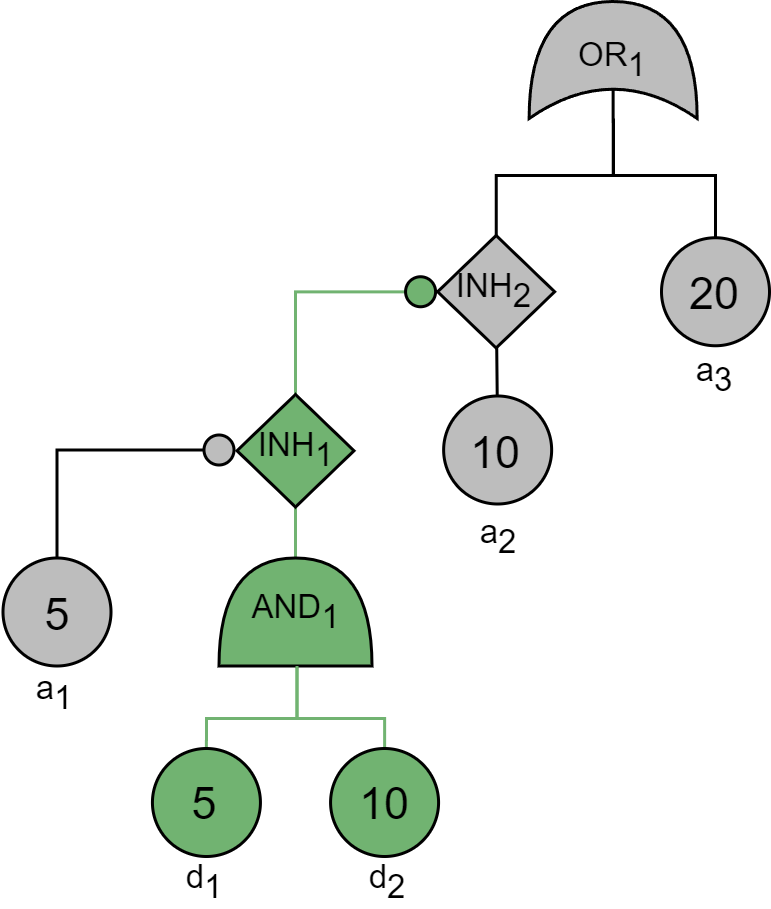}
  \end{center}
    \caption{Tree-structured ADT annotated with offensive and defensive costs.}
    \label{fig:attack_tree_example}
\end{figure}

The extent to which the attacker and defender vectors determine system status is captured by the structure function:

\begin{definition}[Structure Function] \label{def:structure-function}
Let $T$ be an ADT. The structure function of $T$ is the function $f_T \colon \mathbb{B}^{\mathcal{D}} \times \mathbb{B}^{\mathcal{A}} \times N \to \mathbb{B}$ defined as:
    
\begin{align*}
&f_{T} (\vec{\delta}, \vec{\alpha}, v)\\
&=
        \begin{cases}
          \alpha_v, & \hspace{1.0em} \text{if $v \in \mathcal{A}$}\\
          \delta_v, & \hspace{1.0em} \text{if $v \in \mathcal{D}$}\\
          \bigwedge_{w \in ch(v)} f_{T}(\vec{\alpha}, \vec{\delta}, w), & \hspace{1.0em} \text{if $\gamma(v) = \texttt{AND}$}\\
          \bigvee_{w \in ch(v)} f_{T}(\vec{\alpha}, \vec{\delta}, w), & \hspace{1.0em} \text{if $\gamma(v) = \texttt{OR}$}\\
          f_{T}(\vec{\alpha}, \vec{\delta}, \vartheta(v)) \wedge \neg  f_{T}(\vec{\alpha},  \vec{\delta}, \bar{\vartheta}(v)) &\quad \text{if  $\gamma(v) = \texttt{INH}$.}
        \end{cases}
\end{align*}
\end{definition}

\subsection{Metrics}

Security metrics, such as the lowest attack time or cost, are critical for conducting quantitative assessments of systems and making educated decisions. To achieve this, we use a well-established method called the semiring framework. 
We define security and defense attribute domains as linearly ordered unital semiring attribute domains, extending the classical semiring structure by incorporating a linear order. This additional assumption ensures compatibility with the minimization and maximization operations required for ranking and prioritizing defense and attack strategies, which are not inherently supported by traditional semirings.

\begin{definition}[Semiring Attribute Domain]
    A \textit{linearly ordered unital semiring attribute domain} (simply semiring attribute domain) is a tuple \[ L = (V, \otimes, 1_\oplus, 1_\otimes, \preceq) \]
where $V$ is a set; $1_{\otimes}$ and $1_{\otimes}$ are elements of $V$; $\otimes$ is a commutative associative binary operation on $V$; and $\preceq$ is a linear order on $V$. These furthermore satisfy the following properties:
\begin{itemize}
\item $\otimes$ is monotonous w.r.t. $\preceq$, i.e., for all $x,y,z \in V$ with $x \preceq y$ one has $x \otimes z \preceq y \otimes z$;
\item $1_{\otimes}$ is both the unit of $\otimes$, and minimal w.r.t. $\preceq$;
\item $1_{\oplus}$ is maximal w.r.t. $\preceq$.
\end{itemize}
\end{definition}

The terminology \emph{semiring} is justified as follows: given a semiring attribute domain, one can define a binary operator $\oplus$ on $V$ by
\[
x \oplus y = \min_{\preceq}(x,y).
\]
Then $(V,\oplus,\otimes)$ is a semiring; in fact, it is \emph{absorbing} in the sense of \cite{Lopuha_Efficient_2023}. We require the semiring to be ordered by $\preceq$ in order to define Pareto optimality. However, this is not a stringent constraint, as Table \ref{table:attr-domains} shows that many attack tree metrics fit into our framework.

\begin{table}[h]
    \centering
    \begin{tabular}{lcccccc}
    \toprule
    Metric & $V$ & $\oplus$ & $\otimes$ & $1_{\oplus}$ & $1_{\otimes}$ & $\preceq$ \\
    \toprule
    min cost & $[0,\infty]$ & min & + & $\infty$ & 0 & $\leq$ \\
    min time (sequential) & $[0,\infty]$ & min & + & $\infty$ & 0 & $\leq$ \\
    min time (parallel) & $[0,\infty]$ & min & max & $\infty$ & 0 & $\leq$ \\
    min skill & $[0,\infty]$ & min & max & $\infty$ & 0 & $\leq$ \\
    probability & $[0, 1]$ & max & $\cdot$ & 1 & 0 & $\geq$ \\
    \bottomrule
    \end{tabular}
    \caption{Semiring attribute domains}
    \label{table:attr-domains}
\end{table}

The attacker and defender have separate attribute domains. The attacker and defender attribute domains are described by $\mathbb{D}_A$ and $\mathbb{D}_D$, respectively. $\beta_A$ assigns an attribute value from $V_A$ to each basic attack step in $\mathcal{A}$, while $\beta_D$ does so to each basic defense step in $\mathcal{D}$. Since the attacker’s attribute values lie in \( V_A \) and the defender’s in \( V_D \). Since the attacker and defender have separate attributes, we need to combine these two into an attribute pair to perform a quantitative analysis. The attribute pair for an event will naturally have values in $V_D \times V_A$. 
\begin{definition}[Augmented Attack-Defense Tree]
   An Augmented Attack-Defense Tree (AADT) is an extension of the attack-defense tree \( T \) with associated semiring attributes. For simplicity, we use \( T \) to refer to both the original attack-defense tree and the Augmented Attack-Defense Tree (AADT), as long as the context makes the distinction clear. An AADT is defined as a tuple:

\[
T = (T, D_{\mathcal{D}}, D_{\mathcal{A}}, \beta_D, \beta_A) 
\]
Where:
\begin{itemize}
\item  \( D_{\mathcal{D}} = (V_D, \oplus_D, \otimes_D, 1_\oplus, 1_\otimes) \) is the defender's semiring attribute domain,
\item \( D_{\mathcal{A}} = (V_A, \oplus_A, \otimes_A, 1_\oplus, 1_\otimes) \) is the attacker's semiring attribute domain.
\end{itemize}
Each domain has an associated basic assignment function: \( \beta_D: \mathcal{D} \to V_D \) and \( \beta_A: \mathcal{A} \to V_A \) are functions.
\end{definition}

\begin{definition}[Metric Values]\label{def:beta-hat}
     For a given AADT, the metric value of a defense vector $\vec{\delta}$ is given by:
     \[
     \hat{\beta}_D \colon \mathbb{B}^\mathcal{D} \to V_D \text{ with } \hat{\beta}_D(\Vec{\delta}) = \underset{\delta_d=1}{\underset{d \in D}{\bigotimes}}\beta_D(d)
     \]
      while the metric value of an attack vector $\vec{\alpha}$ is given by:
      \[
      \hat{\beta}_A \colon \mathbb{B}^\mathcal{A} \to V_A \text{ with } \hat{\beta}_A(\Vec{\alpha}) = \underset{\alpha_a=1}{\underset{a \in A}{\bigotimes}} \beta_A(a)
      \]
      Lastly, the metric value of an event is given by:
      \[
      \hat{\beta} \colon \mathbb{B}^\mathcal{D} \times  \mathbb{B}^\mathcal{A} \to V_D \times V_A \text{ with } \hat{\beta}\bigl((\vec{\delta}, \vec{\alpha})\bigr) = \bigl(\hat{\beta}_D(\Vec{\delta}), \hat{\beta}_A(\Vec{\alpha})\bigr)
      \]
\end{definition}

\begin{example}
    Let's apply the definition of $\hat{\beta}$ on Fig. \ref{fig:attack_tree_example} as an example to determine the metric values of the attack and defense vector pair. 
We define a defense vector by listing the names of the activated nodes while excluding the disabled ones. Similarly, we define an attack vector in the same manner. Thus, instead of using binary vectors, we use sets to indicate the nodes that are active in each case. For example, let $11$ and $110$ be a pair of defense and attack vectors, where $11$ represents the active defense nodes (both $d_1$ and $d_2$), and $110$ represents the active attack nodes (both $a_1$ and $a_2$ are active).  Since we are working with the minimal cost domain, $\mathbb{D}_A = \mathbb{D}_D = (\mathbb{R}_{\geq 0}, \min, +)$. To determine the metric values of this vector pair, we apply the definition of $\hat{\beta}$:

\begin{align*} 
    \hat{\beta}_D\bigl(\{d_1,d_2\}\bigr) &= \underset{\alpha_d = 1}{\underset{d \in \{d_1, d_2\}}{\bigotimes}} \beta_D(d) = \beta_D(d_1) + \beta_D(d_2) \\
    &= 5 + 10 = 15 \\
    \hat{\beta}_A\bigl(\{a_1, a_2\}\bigr) &= \underset{\alpha_a = 1}{\underset{a \in \{a_1, a_2, a_3\}}{\bigotimes}} \beta_A(a) = \beta_A(a_1) + \beta_A(a_2)\\ 
    &= 5 + 10 = 15\\
    \hat{\beta}\bigl(\{d_1,d_2\}, \{a_1, a_2\}\bigr) &= (15,15)
\end{align*} 
\end{example}
Before we define the concept of a successful attack, it's important to clarify how the overall interaction between activated defenses and attacks is modeled in our framework. Each event in the AADT represents a specific attack and the corresponding defense that seeks to mitigate or prevent it. The effectiveness of a defense can vary based on the particular attacks it faces, and this interaction is what ultimately determines whether an attack is successful or not.

In our model, what is considered to be a successful attack depends critically on the root node of the tree. If the root node is an attack node, the attack is considered successful if the associated defense mechanisms fail to prevent it, which mathematically means that \(f_{T} \bigl(\vec{\delta}, \vec{\alpha}, R_T\bigr) = 1\). On the other hand, if the root node is a defense node, the attack is deemed successful if it destroys the defense mechanism, corresponding to \(f_{T} \bigl(\vec{\delta}, \vec{\alpha}, R_T\bigr) = 0\). This distinction highlights the pivotal role of the tree's structure in determining the outcome of an attack and its dependence on the defenses in place.

\subsection{Pareto analysis}

In real-world security scenarios, defenses are typically deployed before any potential attacks occur, forming a proactive line of protection. In our framework, we mirror this by first activating the defense nodes. These defenses represent the system's initial response to potential threats, and once they are in place, the attacker proceeds with their actions based on the activated defenses. This approach reflects the sequential nature of real-life interactions, where defenses are set up to mitigate risks before attacks are launched. By structuring our model this way, we can more accurately analyze the effectiveness of different defense strategies and their impact on the success or failure of subsequent attacks. 

Before defining the optimal attack response, we introduce the notation used here. The function $\texttt{arg}$ selects the attack vector that minimizes the combined metric value, allowing for any vector to be chosen if multiple vectors yield the same minimum. This minimization process reflects the attacker’s optimal response to a given defense.

\begin{definition}[Optimal Attack Response]
    The attacker's response to a defense $\vec{\delta}$ is:
    \begin{equation*}
        \rho(\vec{\delta}) = \begin{cases}
             \underset{\substack{\vec{\alpha} \in \mathbb{B}^{\mathcal{A}}\colon\\ f_T(\vec{\alpha}, \vec{\delta}, R_T) = 1}}{\text{arg }\oplus} \hat{\beta}_A(\vec{\alpha}) & \text{if $\tau(R_T)=\texttt{A}$}\\
             \underset{\substack{\vec{\alpha} \in \mathbb{B}^{\mathcal{A}}\colon\\ f_T(\vec{\alpha}, \vec{\delta}, R_T) = 0}}{\text{arg }\oplus} \hat{\beta}_A(\vec{\alpha}) & \text{if $\tau(R_T)=\texttt{D}$} \\
        \end{cases} 
    \end{equation*}
\end{definition}
In this definition, the function of the root node \( f_T(\vec{\delta}, \vec{\alpha}, R_T) \) determines the attack outcome. If the root node represents an attacker node (\(\tau(R_T) = \texttt{A}\)), the attacker succeeds if they can achieve \( f_T(\vec{\delta}, \vec{\alpha}, R_T) = 1 \). Conversely, if the root node represents a defender (\(\tau(R_T) = \texttt{D}\)), then \( f_T(\vec{\delta}, \vec{\alpha}, R_T) = 0 \) indicates that the defense has failed, resulting in an attack success.
However, there are two potential issues with this approach:
\begin{itemize}
    \item Multiple Candidates: If there are multiple attack vectors that satisfy the minimum \(\texttt{arg }\oplus\) value, any of these vectors may be chosen, as they all yield the same metric outcome for \(\hat{\beta}_A(\rho(\vec{\delta}))\).
    \item No Valid Candidates: If no valid attack vectors exist for $\vec{\delta}$, we write $\rho(\vec{\delta}) = \times$ with $\hat{\beta}_A(\times) = 1_{\oplus}$. As $1_{\oplus}$ is maximal with respect to 
\end{itemize}
This ensures that the optimal attack response is well-defined even in cases where a valid attack vector does not exist. The defender assumes that the attacker behaves optimally, i.e., always performs $\rho(\vec{\delta})$ when it exists. This leads to the following notion of \emph{feasible events}:

\begin{definition}
Let $T$ be an AADT. Its set of \emph{feasible events} $\mathcal{S} \subseteq V_D \times (V_A \cup \{\times\})$ is defined as
\[
\mathcal{S} = \{(\vec{\delta},\rho(\vec{\delta})) \mid \vec{\delta} \in \mathbb{B}^{\mathcal{D}}\}.
\]
\end{definition}

\begin{example} 
    Consider the attack-defense tree in Fig. \ref{fig:attack_tree_example}.
    For convenience, we write defense and attack vectors as binary strings; thus we write $011$ for the attack vector $\vec{\alpha} = (0,1,1)$, i.e. the attack consisting of $a_2$ and $a_3$ but not $a_1$. 
    When no defenses are active ($\vec{\delta} = 00$), the attacker can succeed by either $010$ or $001$. These have costs $10$ and $20$, respectively, and the attacker chooses the cheapest option; hence $\rho(00) = 010$. If only one defense is active, the attack responses remain the same as in the no-defense scenario, as a single defense alone is insufficient and has no effect due to the AND gate. When both defenses are active, the possible attacks are $110$ (cost 15) and $001$ (cost 20); hence $\rho(11) = 110$. Thus
    \[
    \mathcal{S} = \{(00,010),(01,010),(10,010),(11,110)\}.
    \]    
\end{example}

The trade-off between the defender's and attacker's actions can be analyzed via the Pareto Front. Specifically, the defender has two primary objectives: \say{minimizing} their own cost and \say{maximizing} the attacker’s cost, with these terms defined according to the defender's order \(\preceq_D\) and the attacker's order \(\preceq_A\), respectively. The Pareto Front represents the set of optimal trade-offs between these competing objectives. 
A point in the Pareto Front is called \textit{Pareto Optimal} if there is no other solution better in all objectives. The concept of \textit{better} is formally known as dominance, and to define it, a few pre-requisite properties must first be established.

\begin{definition}[Pareto dominant] \label{def:pareto_ordering}
    Given two events and their valuations $(s_1, t_1)$ and $(s_2, t_2)$, the pair $(s_1, t_1)$ dominates $(s_2, t_2)$ i.e. $(s_1, t_1) \sqsubseteq (s_2, t_2)$ when $s_1 \preceq_D s_2$ and $t_1 \succeq_A t_2$. 
\end{definition}

For a general poset $(X, \sqsubseteq)$, a point $x \in X$ is \emph{Pareto optimal} if it is not dominated by any other point in $X$.  Then, the Pareto frontier is the set of all Pareto optimal points in $X$, i.e. $\underbar{min}_{ \sqsubseteq}X = \left\{x \in X \mid \forall x' \in X. x' \neq x, x' \not\sqsubset x\right\}$.

\begin{example}
    Consider the ADT 
    where the defender has two possible costs, \(d \in \{5, 10\}\), and the attacker has three possible costs, \(a \in \{5, 10, 20\}\). Let \(X = \{(10, 10), (5, 20), (5, 5)\}\). The pair \((5, 20)\) dominates both \((10, 10)\) and \((5, 5)\) because it satisfies \(5 \preceq_D 10\) and \(5 \preceq_D 5\) (lower defender cost) as well as \(20 \succeq_A 10\) and \(20 \succeq_A 5\) (higher attacker cost). The Pareto front for this specific example is \((5, 20)\), representing non-dominated trade-offs between defender and attacker costs.
\end{example}

Having all the necessary mathematical prerequisites defined, we can formally state the problem statement of this paper: 

\begin{highlighted} \label{research-problem}
\textbf{Research Goal.} For an Augmented Attack-Defense Tree  $T$, we aim to find the minimal elements of the Pareto Front $\text{PF}(T)$, defined as
\[\underline{\min}_{\sqsubseteq} \hat{\beta}(\mathcal{S}) \subseteq V_D \times V_A.\]
\end{highlighted}

In principle $\operatorname{PF}(T)$ can be computed by calculating $\hat{\beta}(e)$ for all $e \in \mathcal{S}$, and computing the Pareto front of the set of pairs of metric values; thus we need to compute the  metric values of $2^{|\mathcal{D}|}$ events. In the worst case, this is unavoidable, as the following example shows.

\begin{example} \label{ex:max}
Consider the AADT $T$ of Fig.~\ref{fig:maxpf} (for a fixed integer $n$). Since $\tau(R_T) = \mathtt{D}$, the attacker's goal is to stop the defender from activating $R_T$. This is done by activating the attacks corresponding to the observed defenses, i.e., $\rho(\vec{\delta}) = \delta$ as binary vectors. Furthermore, $\hat{\beta}_D(\vec{\delta}) = \hat{\beta}_A(\vec{\delta}) = k$, where $k$ is the integer encoded by the binary number $\vec{\delta}$; hence
\[
\mathcal{S} = \{(k,k) \in \mathbb{Z}^2 \mid 0 \leq k \leq 2^n-1\}.
\]
All elements of $\mathcal{S}$ are Pareto optimal, so $|\text{PF}(T)| = 2^n = 2^{|\mathcal{D}|}$.
\end{example}

\begin{figure}
    \centering
    \includegraphics[width=0.9\linewidth]{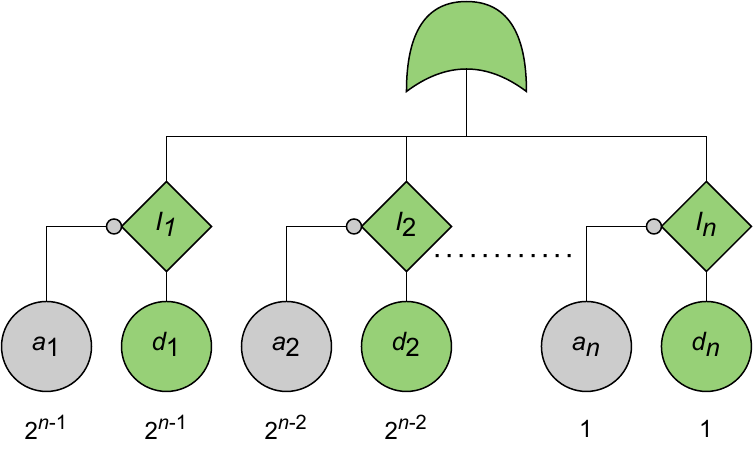}
    \caption{An AADT (with min cost as both attacker and defender metrics) with $|\text{PF}(T)| = 2^n$.}
    \label{fig:maxpf}
\vspace{-20pt}
\end{figure}

Example \ref{ex:max} shows that a worst-case exponential time complexity is unavoidable. Nevertheless, the brute force approach of computing each $\hat{\beta}(e)$ is often inefficient, as the final Pareto front is generally much smaller. In the following two sections, we present two algorithms that compute the Pareto front more efficiently, by discarding nonoptimal events along the way. Their efficiency is investigated empirically in Section \ref{Experiments}.

\section{The Pareto Front for Tree-shaped ADTs} \label{PF_tree}

In this section, we present methodologies for calculating the Pareto Front of tree-structured ADTs, which are ADTs where each node has a single parent, focusing on distinct structural representations. 
For tree-shaped ADTs, we employ a \texttt{Bottom-Up} Algorithm to efficiently derive optimal strategies. 
The ADT metrics are computed using a \texttt{Bottom-Up} approach based on the gate type and its function (attack or defense). \texttt{Bottom-Up} algorithms have been extensively studied for Fault Trees (FTs) and Attack Trees (ATs), providing a framework for systematic metric evaluation. To extend these algorithms to Attack-Defense Trees (ADTs), we introduce the following steps for a node $v \in N$:
\begin{enumerate}
    \item Compute the Pareto Front \texttt{Bottom-Up} for each $w \in ch(v)$.
    \item Identify all possible combinations of points from the children's Pareto Fronts. 
    \item For each combination, apply the $\min$ or $+$ operations as examples of common metrics across the defense and attack costs, depending on the values of $\gamma(v)$ and $\tau(v)$. While we use these specific metrics for illustrative purposes in this informal definition, the approach supports general metrics defined over the semiring structure.
    \item Discard the dominated points from the previous step.
\end{enumerate}

In step 1, the \texttt{Bottom-Up} algorithm is recursively applied to each child of $v$. Due to this recursive nature, the algorithm will first complete for the leaf nodes, and the results will then be propagated up the tree towards the root node. For step 2, the Cartesian product is used to find all possible ways to combine the children's Pareto fronts. Unfortunately, computing all combinations is unavoidable, as it is impossible to predict which points will be included in the Pareto front before evaluating all the value pairs in step 3.

The objective of step 3 is to transform a vector of value pairs into a single value pair $(s, t)$ for each possible combination based on Table \ref{tab:operators-applied}. 

\begin{example}\label{example5}
    Consider an AADT with the structure \( OR(INH(d_1|a_1), INH(d_2|a_2)) \) as shown in Fig. \ref{fig:example5}, where \( d_1, d_2 \) are defenses and \( a_1, a_2 \) are attacks. 
    Each defense has an associated cost: \( \beta_D(d_1) = 4 \), \( \beta_D(d_2) = 8 \). Similarly, the attack costs are \( \beta_A(a_1) = 5 \), \( \beta_A(a_2) = 10 \). In this example, we use a semiring where \((\otimes, \oplus) = (+, \min)\) for both attack and defense costs. 
\begin{enumerate}
    \item Leaf Nodes: For the leaf nodes, the cost pairs are:
    \begin{align*}
        a_1 &: (0, 5) \quad \text{(no defense cost, attack cost of 5)} \\
        a_2 &: (0, 10) \quad \text{(no defense cost, attack cost of 10)} \\
        d_1 &: (0, 0), (4, \infty) \quad \text{(defense can be inactive or active)} \\
        d_2 &: (0, 0), (8, \infty) \quad \text{(defense can be inactive or active)}.
    \end{align*}

    \item \textbf{\( INH(d_1|a_1) \):}  
    The inhibiting gate combines each \( a_1 \) cost with \( d_1 \) using \( (+, +) \), where both defense and attack costs are summed (as shown in Table \ref{tab:operators-applied}). The reason is that the attacker must successfully execute both the inhibiting attack ($d$) and the inhibited attack ($a$) for the system to fail. Similarly, the defender must allocate resources to counter both $d$ and $a$ to fully protect the system. As a result, the operations for the attack INH gate are additive, represented as $(+, +)$. Therefore:
\begin{align*}
        (0 + 0, 5 + 0) &= (0, 5) \\
        (0 + 4, 5 + \infty) &= (5, \infty).
    \end{align*}
Final Pareto front for \( INH(d_1|a_1) \): \( (0, 5), (4, \infty) \).

    \item \textbf{\( INH(d_2|a_2) \):}  
    \begin{align*}
        (0 + 0, 10 + 0) &= (0, 10) \\
        (0 + 8, 10 + \infty) &= (8, \infty).
    \end{align*}
    Final Pareto front for \( INH(d_2|a_2) \): \( (0, 10), (8, \infty) \).

    \item \textbf{\( OR(INH(d_1|a_1), INH(d_2|a_2)) \):}  
    The attack \( OR \) gate combines the Pareto fronts from \( INH(d_1|a_1) \) and \( INH(d_2|a_2) \), using \( (+, \min) \), where defense costs are summed, and attack costs take the minimum. The reason is that for the \texttt{OR} gate (\(\gamma(v) = \texttt{OR}\)) with attack function (\(\tau(v) = \texttt{A}\)), the attacker needs to succeed in only one of the available attacks to disable the system. In contrast, the defender must simultaneously defend against all possible attacks, not knowing which one the attacker might choose. Therefore:
    \begin{align*}
        (0+0, \min(5, 10)) &= (0, 5) \\
        (0+8, \min(5, \infty)) &= (8, 5) \\
        (4+0, \min(\infty, 10)) &= (4, 10) \\
        (4+8, \min(\infty, \infty)) &= (12, \infty).
    \end{align*}
    After removing dominated points, the Pareto front is: \( (0, 5), (4, 10), (12, \infty) \).
\end{enumerate}
\end{example}

\begin{figure}
    \centering
    \includegraphics[width=0.7\linewidth]{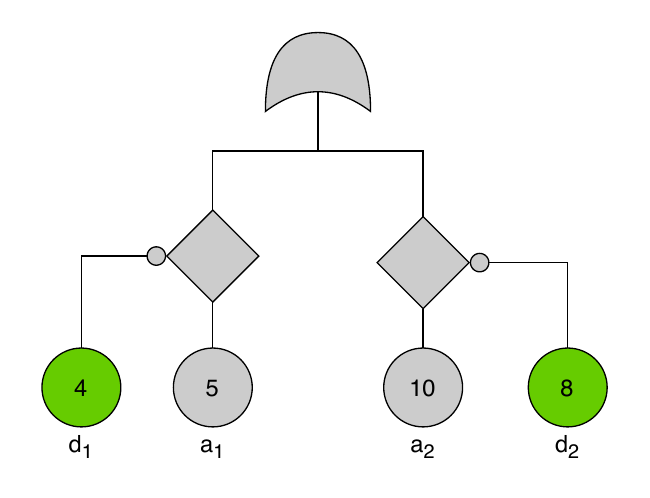}
    \caption{The AADT of Example \ref{example5}}
    \label{fig:example5}
\end{figure}

\begin{table}
    \centering
    \renewcommand{\arraystretch}{1.15} 
    \begin{tabular}{c|c|c|c}
        \toprule
        $\gamma(v)$ & $\tau(v)$ & Def. op ($\circ_D$) & Att. op ($\circ_A$) \\ 
        \hline
        \multirow{2}{*}{\texttt{AND}} & \texttt{A} & $\otimes_D$ & $\otimes_A$ \\ 
        \cline{2-4}
         & \texttt{D} & $\otimes_D$ & $\oplus_A$ \\ 
        \hline
        \multirow{2}{*}{\texttt{OR}} & \texttt{A} & $\otimes_D$ & $\oplus_A$ \\ 
        \cline{2-4}
         & \texttt{D} & $\otimes_D$ & $\otimes_A$ \\ 
        \hline
        \multirow{2}{*}{\texttt{INH}} & \texttt{A} & $\otimes_D$ & $\otimes_A$ \\ 
        \cline{2-4}
         & \texttt{D} & $\otimes_D$ & $\oplus_A$ \\ 
        \bottomrule
    \end{tabular}
    \caption{Operators applied in the \texttt{Bottom-Up} algorithm.}
    \label{tab:operators-applied}
\vspace{-15pt}
\end{table}

The output of step 3 is a set of value pairs. In step 4, we reduce the elements of this set to the Pareto Front by discarding all the dominated points according to Definition \ref{def:pareto_ordering}. In Alg. \ref{alg:BU}, the complete algorithm is presented, where $\underline{\min}_{\sqsubseteq}$ represents step 4 specifically.

    \begin{algorithm}
    \caption{\texttt{Bottom-Up}}
    \label{alg:BU}
    \begin{algorithmic}[1]
        \Require 
            \Statex{ $T$: augmented attack-defense tree}
            \Statex{$v$: node $v \in N$}
        \Ensure {...}
        \Procedure{BU}{$T, v, \beta$}
        \If{$v \in \mathcal{A}$} 
            \State \Return $\left\{(1_{\otimes}, \beta_A(v))\right\}$
        
        \ElsIf{$v \in \mathcal{D}$}
             \State \Return $\left\{(1_{\otimes},1_{\otimes}), (\beta_D(v), 1_{\oplus})\right\}$
    
        \Else 
            \State $p \gets \bigtimes_{u \in ch(v)} \text{BU}(T, u, \beta) $ 
            \State $pv \gets \{\left({\bigcirc_D}_{i=1}^n d_i, {\bigcirc_A}_{i=1}^n a_i \right) \mid (\Vec{d}, \Vec{a}) \in p \}$ 
            \State \Return \Call{$\underline{\min}_{\sqsubseteq}$}{$pv$} 
        \EndIf
        \EndProcedure
    \end{algorithmic}
    \end{algorithm}

The following theorem formalizes the relationship between the \texttt{Bottom-Up} algorithm (\(\mathrm{BU}\)) and the Pareto Front (\(\mathrm{PF}_S\)) for a tree-shaped augmented ADT (AADT).
\begin{theorem} \label{thm:bu}
Let $T$ be a tree-shaped AADT. Then $\mathrm{BU}(T,R_T) = \mathrm{PF}_{S}(T)$.
\end{theorem}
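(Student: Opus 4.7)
The natural approach is structural induction on $v \in N$, but since the theorem only addresses $R_T$, I would first strengthen the inductive hypothesis so that it applies to every node. For each $v$, let $T_v$ denote the subtree rooted at $v$ (well-defined because $T$ is tree-shaped), with disjoint BAS/BDS sets $\mathcal{A}_v,\mathcal{D}_v$. Define the attacker's target value at $v$ as $b_v=1$ if $\tau(v)=\mathtt{A}$ and $b_v=0$ if $\tau(v)=\mathtt{D}$, reflecting that a top-level attack succeeds precisely when the attacker forces the root's output in the direction opposite to its owner. Then define $\mathcal{S}_v$ by analogy with $\mathcal{S}$, replacing $R_T$ by $v$ and the target value $1$ by $b_v$. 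The strengthened claim is
\[
\mathrm{BU}(T,v) \;=\; \underline{\min}_{\sqsubseteq}\,\hat\beta(\mathcal{S}_v),
\]
from which the theorem follows since $\mathcal{S}_{R_T}=\mathcal{S}$ by definition of $\rho$.

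The base cases $v\in\mathcal{A}$ and $v\in\mathcal{D}$ reduce to direct inspection of the two leaf returns in Algorithm~\ref{alg:BU}, using that $1_\oplus$ is assigned to infeasible attacker responses. For the inductive step I would case-split on $(\gamma(v),\tau(v))$ and verify that the operators $(\circ_D,\circ_A)$ in Table~\ref{tab:operators-applied} encode the correct combination rule. The key observation is that, because $T$ is tree-shaped, the sets $\mathcal{D}_w$ (resp.\ $\mathcal{A}_w$) for $w\in ch(v)$ partition $\mathcal{D}_v$ (resp.\ $\mathcal{A}_v$), so every $(\vec\delta,\vec\alpha)\in\mathbb{B}^{\mathcal{D}_v}\times\mathbb{B}^{\mathcal{A}_v}$ factors uniquely as a tuple of child-level events. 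Then, for instance, an $\mathtt{OR}$ gate with $\tau(v)=\mathtt{A}$ requires at least one child to output $1$, so the attacker picks the cheapest successful child (giving $\oplus_A$) while the defender must pay for every subtree (giving $\otimes_D$); an $\mathtt{INH}$ gate with $\tau(v)=\mathtt{A}$ requires $\vartheta(v)=1$ and $\bar\vartheta(v)=0$ simultaneously, and since the children's differing $\tau$-values mean both $\mathrm{BU}$ outputs already describe attacker-favorable events, costs combine additively on both sides. The remaining four cases are analogous.

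The main obstacle is showing that reducing to each child's Pareto front \emph{before} taking the Cartesian product loses no Pareto-optimal events at $v$. Concretely, suppose $(s,t)\in\hat\beta(\mathcal{S}_v)$ is Pareto optimal, realised by some $(\vec\delta,\vec\alpha)$, and suppose its restriction $(\vec\delta_w,\vec\alpha_w)$ to some child $w$ yields a pair $(s_w,t_w)$ strictly dominated inside $\mathrm{BU}(T,w)$ by some $(s_w',t_w')$. Swapping $(\vec\delta_w,\vec\alpha_w)$ for a realizer of $(s_w',t_w')$ (possible by tree-shapedness, which makes subtree strategies independent) produces a new event whose aggregated value strictly dominates $(s,t)$ — this uses the monotonicity of $\otimes_D,\otimes_A$ from the semiring axioms, plus the fact that $\oplus=\min_{\preceq}$ is itself monotone, so dominance at the child level propagates to dominance at $v$. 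This contradicts Pareto optimality of $(s,t)$. Once this non-loss direction is combined with the straightforward forward direction (every point produced by the combined formula is attainable), the final $\underline{\min}_\sqsubseteq$ step in Algorithm~\ref{alg:BU} removes dominated pairs arising from suboptimal cross-combinations, closing the induction.
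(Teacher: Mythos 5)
Your proposal follows essentially the same route as the paper: a strengthened induction claiming $\mathrm{BU}(T,v)=\underline{\min}_{\sqsubseteq}\hat\beta(\mathcal{S}_v)$ for every node, the use of tree-shapedness to factor events over the children (this is the paper's first, unlabelled lemma, which shows $\hat{\beta}_A(\rho(\vec{\delta}_1,\vec{\delta}_2))=\hat{\beta}_A(\rho_1(\vec{\delta}_1))\circ_A\hat{\beta}_A(\rho_2(\vec{\delta}_2))$, proved case-by-case over $(\gamma(v),\tau(v))$ exactly as you propose), and a commutation statement saying that taking the children's Pareto fronts before combining loses nothing, which is the paper's Lemma~\ref{lem:pf}. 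The one substantive slip is in your exchange argument: from strict dominance $(s_w',t_w')\sqsubset(s_w,t_w)$ at a child you conclude that the swapped event's aggregated value \emph{strictly} dominates $(s,t)$, "contradicting" Pareto optimality. The semiring axioms only give weak monotonicity of $\otimes_D$, $\otimes_A$ and $\oplus_A=\min_{\preceq_A}$ (e.g., with min-cost and an \texttt{OR}/\texttt{A} gate, improving one child's attack value need not change the minimum over children), so the swap only yields weak dominance at $v$, and no contradiction arises. The repair is exactly what the proof of Lemma~\ref{lem:pf} does: the swapped value lies in the full combined set $\psi(X,X')$ and weakly dominates the Pareto-optimal $(s,t)$, hence must equal it; therefore $(s,t)$ is still realised from child-Pareto-front points and survives the reduction. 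With that correction (and spelling out, as the paper does, the converse inclusion that combining reduced fronts introduces no spurious Pareto-optimal points, which follows from the first inclusion and finiteness rather than being merely "the straightforward forward direction"), your argument coincides with the paper's proof.
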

Based on this theorem, the \texttt{Bottom-Up} approach computes the Pareto Front for the root node \(R_T\), aggregating optimal points from the children nodes according to the semantics $S$. 

While the procedure loops (recursively) over the number of vertices, the main complexity comes from the fact that in Lines 7--8 we need to combine increasingly large Pareto fronts. Fig.~\ref{fig:maxpf} shows that these are worst-case exponential; hence Alg.~\ref{alg:BU} is worst-case exponential as well. Nevertheless, in practice more events will be non-Pareto-optimal, and will be discarded earlier; we will assess performance in the experiments.

\section{The Pareto Front for DAG-shaped ADTs} \label{PF_DAG}

The \texttt{Bottom-Up} (\texttt{BU}) algorithm does not work for DAG-shaped ADTs: When a node has multiple parents, the Pareto Front computed at that node is propagated multiple times up the tree, leading to that value being counted several times.

For regular ATs, we know from \cite{Lopuha_Efficient_2023} that generally, computing a metric for a semiring attribute domain in a DAG-structured AT is NP-hard. The same holds true for the ADTs in this paper since they represent an extension of regular ATs. An enumerative approach to compute the Pareto Front would be inefficient; Therefore, our approach to compute the Pareto Front for each possible defense vector $\vec{d}$, find attack vector $\vec{a}$, and then remove dominated points according to Def. \ref{def:pareto_ordering} (using $\sqsubseteq$). First, we introduce the \texttt{Naive} approach to calculate the Pareto Frontier, which provides a baseline for comparison. Then, we present our optimized solutions. Finally, we compare the results of all approaches to highlight the advantages and demonstrate how they improve upon the \texttt{Naive} method.
\subsection{\texttt{Naive} approach}
As previously mentioned, computing a metric for a semiring attribute domain in a DAG-structured ADT is NP-hard. Even so, formally defining this approach is still beneficial: it provides a practical reference point for what $\underbar{min}_{\sqsubseteq}\hat{\beta}(\mathcal{S})$ outputs for DAGs, while serving as a stepping stone for the next algorithms to improve on. 
Algorithm \ref{alg:dummy} is rather straightforward. Lines 4-11 compute $\rho(\vec{\delta})$ for each $\vec{\delta}$ by going through all the possible attacks and finding the one with the minimum metric value. In the end, the value pairs are reduced to the Pareto front using $\underline{\min}_{\sqsubseteq}$.

\begin{algorithm}
\caption{\texttt{Naive} algorithm for DAGs}
\label{alg:dummy}
\begin{algorithmic}[1]
    \Require 
        \Statex{$T$: attack-defense tree}
        \Statex{$v$: node $v \in N$}
        \Statex{$\beta$: assignment of nodes $\in N$}
    \Ensure {Pareto front of the sub-tree rooted at $v$.}
    \Procedure{\texttt{Naive}}{$T, v, \beta$}
    \State $result \gets $ new array
    \For {$\vec{\delta} \in 2^\mathcal{D}$}
        \State $att\_MetValues \gets $ new array
        \For {$\vec{\alpha} \in 2^\mathcal{A}$}
            \If{$f_T(\Vec{\delta}, \Vec{\alpha}, R_T) = 1$} 
                \State Add $\hat{\beta}_A(\vec{\alpha})$ to $att\_MetValues$
            \EndIf
        \EndFor

        \If {$att\_MetValues = \varnothing$}
            \State Add $\bigl(\hat{\beta}_D(\vec{\delta}), \infty\bigr)$ to $result$
        \Else
            \State Add $\bigl(\hat{\beta}_D(\vec{\delta}), \min_{\preceq_{A}}(att\_MetValues)\bigr)$ to $result$
        \EndIf
    \EndFor

    \State \Return $\underline{\min}_{\sqsubseteq}(result)$
    \EndProcedure
\end{algorithmic}
\end{algorithm}

\subsection{Binary Decision Diagrams}

\begin{figure}
\centering
\begin{minipage}{.47\columnwidth}
  \centering
  \includegraphics[width=0.7\textwidth]{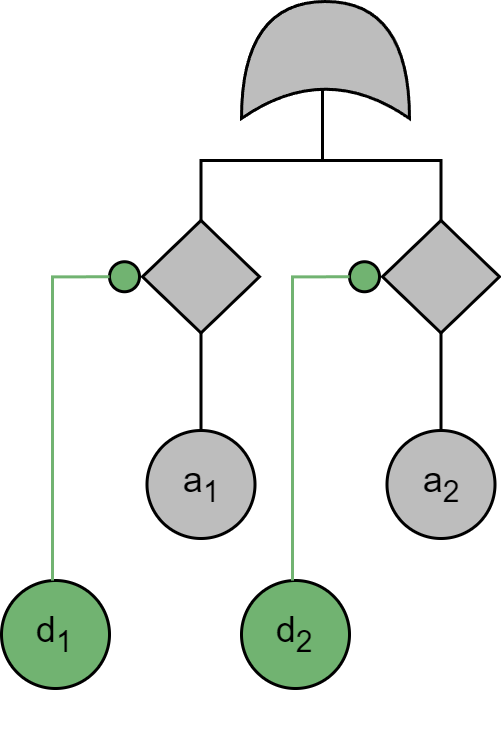}
  \label{fig:bdd_bar_order}
\end{minipage}
\hfill
\begin{minipage}{.47\columnwidth}
  \centering
  \includegraphics[width=.7\linewidth]{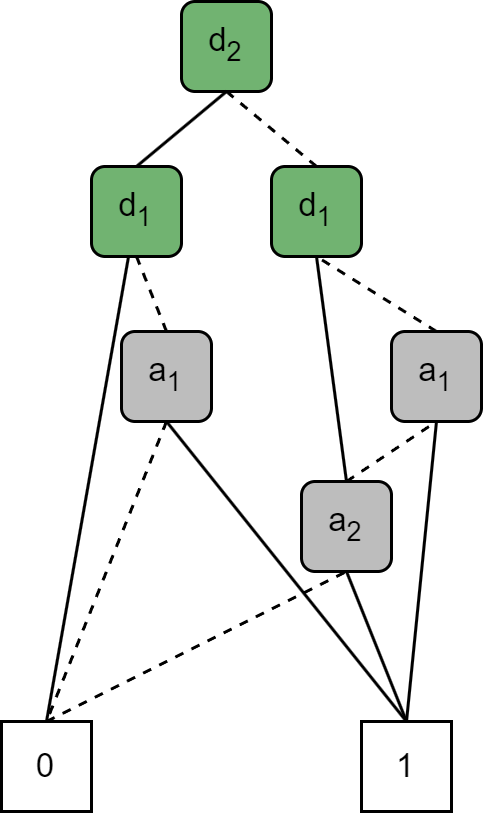}
  \label{fig:bdd_correct_order}
\end{minipage}
\caption{An ADT and a ROBDD corresponding to its structure function, with variable order $d_2 < d_1 < a_1 < a_2$. Dashed lines are labeled 0, solid lines are labeled 1.} \label{fig:aftbdd}
\vspace{-15pt}
\end{figure}

Binary Decision Diagrams (BDDs) offer a compact representation of Boolean functions. Since BDDS can have shared sub-trees, these are able to model DAG-structured ADTs. Generally, a BDD represents a Boolean function $f\colon \mathbb{B}^{vars} \rightarrow \mathbb{B}$ over a set of variables $vars$. A BDD translates a Boolean function to a flowchart, where each nonterminal node is labeled by a variable, and has two outgoing edges labeled 0 and 1. Starting from the root, given an input vector $\vec{b}$ to the Boolean function, the BDD is traversed by taking the $0$-labeled edge at a node labeled $x \in vars$ if $b_x = 0$, and taking the $1$-labeled edge if $b_x = 1$. The leaves are labeled $0$ and $1$, which denotes the output of the function for input $\vec{b}$. An example is given in Fig.~\ref{fig:aftbdd}.

We are particularly interested in \emph{reduced ordered BDDs} (ROBDDs), which means that the variable order in the BDD is the same for all paths, and that all unnecessary BDD nodes are removed. Every Boolean function has a unique ROBDD once the variable order is fixed, though different variable orderings result in different ROBDDs. Throughout the rest of this section, we will simply write `BDD' for `ROBDD'.

\begin{definition} 
    An \textit{ROBDD} is a tuple $B = (W, Low, High, Lab, \prec)$ over a set of variables $vars$, where:
    \begin{itemize}
        \item The set of nodes $W$ is partitioned into terminal nodes $(W_t)$ and nonterminal nodes $(W_n)$;
        \item $Low : W_n \rightarrow W$ maps each nonterminal node to its \textit{low} child;
        \item $High : W_n \rightarrow W$ maps each nonterminal node to its \textit{high} child;
        \item $Lab : W \rightarrow \{0, 1\} \cup vars$ maps terminal nodes to Booleans and nonterminal nodes to variables:
        \[
        Lab(w) \in 
        \begin{cases} 
          \{0_T, 1_T\} & \text{if } w \in W_t, \\
          vars & \text{if } w \in W_n;
        \end{cases}
        \]
        \item $\prec$ is a strict total order on $vars$ (called the variable ordering).
    \end{itemize}
    Moreover, $B$ satisfies the following constraints:
    \begin{itemize}
        \item $(W, E)$ is a connected Directed Acyclic Graph (DAG), where
        \[
        E = \{(w, w') \in W^2 \mid w' \in \{Low(w), High(w)\}\};
        \]
        \item $B$ has a unique root, denoted $R_B$:
        \[
        \exists! R_B \in W. \forall w \in W_n . R_B \notin \{Low(w), High(w)\};
        \]
        \item The variable ordering $\prec$ is respected, i.e., for every edge $(w, w') \in E$, if $w \in W_n$ and $w' \in W_n$, then $Lab(w) \prec Lab(w')$;
        \item $B$ is \textit{reduced}, meaning:
        \begin{enumerate}
            \item No two distinct nodes have the same label, low child, and high child:
            \begin{align*}
            \forall w, w' \in W_n,\; w \neq w' \implies & \big(Lab(w) \neq Lab(w') \\
            & \lor Low(w) \neq Low(w') \\
            & \lor High(w) \neq High(w')\big).
            \end{align*}

            \item No node has identical low and high children:
            \[
            \forall w \in W_n . Low(w) \neq High(w).
            \]
        \end{enumerate}
    \end{itemize}
\end{definition}

Conventionally edges $(w,Low(w))$ are labeled $0$, and $(w,High(w))$ are labeled $1$; as described above, this determines the Boolean function $B$ represents.

BDDs have been used for the quantitative analysis of fault trees \cite{rauzy-fault-trees} and attack trees \cite{Lopuha_Efficient_2023}. This makes them a promising candidate for ADTs as well. One important consideration is the choice of the variable order $<$. For fault trees, attack trees, and ADTs (as we will see next section), the size of the BDD is a major factor in algorithm complexity. BDD size is worst-case exponential; however, in practice it heavily depends on the choice of $<$. At the same time, finding the optimal $<$ is NP-hard. These two facts have lead to the rich field of BDD minimization, where heuristic methods are developed to find good variable orders \cite{ebendt2005advanced}.

For ADTs the variable order is more restricted: we demand that in the BDD, defenses come before attacks. This reflects that the attacker can choose their attack upon having observed the defender's actions. This turns out to be necessary in order for our algorithm to work (see Theorem \ref{thm:bdd}).

\begin{definition}\label{def:defense-first-ordering}
    A defense-first ordering is a linear order $<$ on $\mathcal{D} \cup \mathcal{A}$ such that for all $d \in \mathcal{D}$ and $a \in \mathcal{A}$, $d < a$.
\end{definition}

\begin{example} \label{exa:bdd}
Consider the AFT of Fig.~\ref{fig:aftbdd}. The variable order of the BDD is defense-first. The paths in the BDD correspond to evaluations of the structure function; for instance, the path $d_2 \rightarrow d_1 \rightarrow a_1 \rightarrow a_2 \rightarrow 1$ represents the fact that $f_T(10,01) = 1$. The path $d_2\rightarrow d_1 \rightarrow a_1 \rightarrow 0$ represents the fact that $f_T(10,0*) = 0$; here $*$ denotes the fact that the value of $\alpha_{a_2}$ is irrelevant.
\end{example}

\subsection{BDD-based algorithm for ADT}

We can use an ROBDD corresponding to an AAFT $T$ to compute its Pareto front. Suppose $\tau(R_T) = \mathtt{A}$, so that the attacker's goal is to reach the 1-leaf of the ROBDD. Every path $\pi$ from the root to 1 represents an event $e_{\pi} = (\vec{\delta},\vec{\alpha})$, with $\delta_d = 1$ if and only if there is a node $w$ such that $Lab(w) = d$ and $(w,High(w))$ is part of $\pi$; the vector $\vec{\alpha}$ is defined likewise. Thus, in the notation of Example \ref{exa:bdd}, we set $* = 0$. Not all possible events are present as paths; for instance in Fig.~\ref{fig:aftbdd} the event $(01,11)$ is not present. However, all feasible events are in the ROBDD; hence
\begin{equation*}
\operatorname{PF}(T) = \underline{\min}_{\sqsubseteq} \{\hat{\beta}(e_{\pi}) \mid \pi \text{ path from root to 1 in ROBDD}\}.
\end{equation*}
Typically, the ROBDD will represent some nonfeasible events as well. For instance, in Fig.~\ref{fig:aftbdd}, the events $(00,10)$ and $(00,01)$ are both present, but only one of them will be feasible. However, this is not a problem, as nonfeasible events will be filtered out in the $\underline{\min}_{\sqsubseteq}$ of the equation above.

This suggests an algorithm to compute $\operatorname{PF}(T)$: compute all paths from the root to 1, determine their metric values, and take the Pareto front. However, this is inefficient as there will be many, partially overlapping parts. Instead, we perform a bottom-up computation on the BDD. This is similar to a typical shortest path algorithm in a directed acyclic graph, except that (1) our computations are in the semirings $D_D,D_A$ instead of $\mathbb{R}$ and (2) instead of propagating a single value, we propagate a Pareto front of optimal value pairs. If $T$ has no defense nodes (i.e., an attack tree), our algorithm is identical to the BDD-based attack tree analysis algorithm of \cite{Lopuha_Efficient_2023}.

\begin{algorithm}
\caption{BDD Bottom Up ($\texttt{BDD}_\texttt{BU}$)}
\label{alg:bdd-bu}
\begin{algorithmic}[1]
    \Require 
        \Statex{$T = (T,D_{\mathcal{D}},D_{\mathcal{A}},\beta_D,\beta_A)$: augmented ADT}
        \Statex{$B$: ROBDD representing $f_T$}
        \Statex{$w$: node of $B$}
    \Ensure {Pareto Front of the ADT encoded by $B$.}
    \Procedure{$\texttt{BDD}_\texttt{BU}$}{$T,B,w$}

    \If{$Lab(w)=0$}
        \State \Return $\begin{cases}
        \{(1_{\otimes_D},1_{\oplus_A})\} & \text{ if $\tau(R_T) = \mathtt{A}$};\\
        \{(1_{\otimes_D},1_{\otimes_A})\} & \text{ if $\tau(R_T) = \mathtt{D}$};
        \end{cases}$
    \ElsIf{$Lab(w)=1$}
        \State \Return $\begin{cases}
        \{(1_{\otimes_D},1_{\otimes_A})\} & \text{ if $\tau(R_T) = \mathtt{A}$};\\
        \{(1_{\otimes_D},1_{\oplus_A})\} & \text{ if $\tau(R_T) = \mathtt{D}$};
        \end{cases}$
    \ElsIf{$Lab(w) \in \mathcal{A}$}
        \State $\{(1_{\otimes_D},u_{0})\} \leftarrow \mathtt{BDD_{BU}}(T,B,Low(w))$;
        \State $\{(1_{\otimes_D},u_{1})\} \leftarrow \mathtt{BDD_{BU}}(T,B,High(w))$;
        \State \Return $\{(1_{\otimes_D},u_{low}\oplus_A(\beta_A(Lab(w))\otimes_A u_{high}))\}$;
    \Else
        \State $P_{0} \leftarrow \mathtt{BDD_{BU}}(T,B,Low(w))$;
        \State $P_{1} \leftarrow \mathtt{BDD_{BU}}(T,B,High(w))$;
        \State $P \leftarrow P_{0} \cup \{(\beta_D(Lab(w))\otimes_D u,u') \mid (u,u') \in P_{1}\}$;
        \State \Return $\underline{\min}_{\sqsubseteq}(P)$
    \EndIf

    \EndProcedure
\end{algorithmic}
\end{algorithm}

The algorithm $\texttt{BDD}_\texttt{BU}$ is presented in Alg.~\ref{alg:bdd-bu}. It is a recursive algorithm; we are interested in $\texttt{BDD}_\texttt{BU}(T,B,R_B)$. To explain it, we will use the min cost semiring (see Table \ref{table:attr-domains}) for both the attacker and defender, but the algorithm works for all semiring attribute domains; we also assume that $\tau(R_T) = \mathtt{A}$; i.e., the attacker's goal is to reach the leaf 1. At every node $w$, each element  $(u,u') \in \texttt{BDD}_\texttt{BU}(T,B,w)$ represents the fact that the attacker can reach 1 from node $w$ by spending $u'$, provided the defender has budget at most $u$. How $\texttt{BDD}_\texttt{BU}(T,B,w)$ is computed depends on the label $Lab(w)$:
\begin{itemize}
    \item If $Lab(w) = 0$, then $w$ is the 0-leaf. From here it is impossible to reach the 1-leaf, i.e., even if the defender has no budget. This is represented by the value pair $(0,\infty)$ (Line 3). 
    \item If $Lab(w) = 0$, then $w$ is the 1-leaf. Thus the attacker does not need to spend any cost to get to 1, which is represented by the value pair $(0,0)$ (Line 5).
    \item If $Lab(w) \in \mathcal{A}$, we compute $\texttt{BDD}_\texttt{BU}(T,B,w)$ from its children $Low(w)$ and $High(w)$. We assume that both Pareto fronts $\texttt{BDD}_\texttt{BU}(T,B,Low(w))$ and $\texttt{BDD}_\texttt{BU}(T,B,High(w))$ consist of a single element with first coefficient $1_{\otimes_D} = 0$ (Lines 7-8); this is true because it is true for the leaves and for the output of $\mathcal{A}$-labeled nodes (Line 9), and our variable order ensures that no descendants of $w$ are $\mathcal{D}$-labeled. At node $w$, the attacker can choose to either not perform attack $Lab(w)$, move to $Low(w)$, and perform the optimal attack there, which has cost $u_{0}$; or perform attack $Lab(w)$ (paying cost $\beta_A(Lab(w))$), move to $High(w)$, and perform the optimal attack there, with cost $u_{1}$. The attacker chooses between the two by minimizing cost, which is $\min(u_{low},\beta_A(Lab(w))+u_{high})$ (Line 9). Since the defender plays no role here, the defender's metric value stays $1_{\otimes_D} = 0$.
    \item If $Lab(w) \in \mathcal{D}$, the defender can choose between either not activating $Lab(w)$, moving to $Low(w)$, and picking one of the defense options there; or activating $Lab(w)$ (incurring cost $\beta_D(Lab(w))$, moving to $High(w)$, and picking one of the defense options there. The resulting combination of defense-attack metric value pairs is given in Line 13. Not all options will be optimal, so we only keep its Pareto front (Line 14).
\end{itemize}

The following theorem expresses correctness.

\begin{theorem} \label{thm:bdd}
Let $T$ be an AADT, and let $<$ be a defense-first ordering on $\mathcal{A} \cup \mathcal{D}$ Let $B$ be the ROBDD with variable ordering $<$ representing $f_T$. Then $\texttt{BDD}_\texttt{BU}(T,B,R_B) = \text{PF}(T)$.
\end{theorem}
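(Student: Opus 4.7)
The plan is to prove the theorem by structural induction on $B$ in reverse topological order, by establishing at every BDD node $w$ a local invariant that specializes to the theorem at $w = R_B$. Assume $\tau(R_T) = \mathtt{A}$; the case $\tau(R_T) = \mathtt{D}$ is symmetric (reflected in the $0/1$ role swap of Lines 3 and 5 of Alg.~\ref{alg:bdd-bu}). For a node $w$, let $f_w$ denote the Boolean sub-function represented by the sub-BDD rooted at $w$, and let $\mathcal{D}_{\geq w} \subseteq \mathcal{D}$ and $\mathcal{A}_{\geq w} \subseteq \mathcal{A}$ be the defense and attack variables that actually appear as labels in this sub-BDD. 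Because $<$ is defense-first, $Lab(w) \in \mathcal{A}$ forces $\mathcal{D}_{\geq w} = \varnothing$. Define the local attacker response
\[
\rho_w(\vec{\delta}') = \min_{\preceq_A} \{\hat{\beta}_A(\vec{\alpha}') \mid \vec{\alpha}' \in \mathbb{B}^{\mathcal{A}_{\geq w}},\ f_w(\vec{\delta}',\vec{\alpha}') = 1\}
\]
(with value $1_{\oplus_A}$ when the set is empty), and the local Pareto front
\[
P_w = \underline{\min}_{\sqsubseteq}\{(\hat{\beta}_D(\vec{\delta}'),\rho_w(\vec{\delta}')) \mid \vec{\delta}' \in \mathbb{B}^{\mathcal{D}_{\geq w}}\}.
\]
The invariant to prove is $\texttt{BDD}_\texttt{BU}(T,B,w) = P_w$.

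The base cases are the two leaves. For $Lab(w) = 0$ no attack makes $f_w = 1$, yielding $\{(1_{\otimes_D},1_{\oplus_A})\}$ as in Line 3; for $Lab(w) = 1$ the empty attack suffices, yielding $\{(1_{\otimes_D},1_{\otimes_A})\}$ as in Line 5. The two inductive steps are Shannon expansions of $f_w$ on $Lab(w)$. If $Lab(w) \in \mathcal{A}$, the inductive hypothesis gives singletons $\{(1_{\otimes_D},u_0)\}$ and $\{(1_{\otimes_D},u_1)\}$ from the children (singletons because $\mathcal{D}_{\geq \cdot}=\varnothing$ persists strictly below $w$ as well), and the attacker picks the cheaper of ``skip $Lab(w)$ via $Low$'' versus ``include $Lab(w)$ at extra cost $\beta_A(Lab(w))$ via $High$''; combining the two options via $\oplus_A$ matches Line 9. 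If $Lab(w) \in \mathcal{D}$, each partial defense over $\mathcal{D}_{\geq w}$ sets $\delta'_{Lab(w)} \in \{0,1\}$: a zero contributes an unchanged pair from $P_0 = \texttt{BDD}_\texttt{BU}(T,B,Low(w))$, and a one contributes $(\beta_D(Lab(w))\otimes_D u,u')$ for some $(u,u') \in P_1$; the union is $P$ in Line 13 and $\underline{\min}_{\sqsubseteq} P$ in Line 14 recovers $P_w$.

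The main subtlety I expect is reconciling the local sets $\mathcal{D}_{\geq w}, \mathcal{A}_{\geq w}$ with the global sets $\mathcal{D},\mathcal{A}$ appearing in $\text{PF}(T)$: some variables of $\mathcal{D} \cup \mathcal{A}$ may never appear as labels in $B$, because the BDD is reduced and $f_T$ does not depend on them. The key auxiliary lemma is that omitting such a variable from the Pareto search is without loss. Concretely, if $f_T$ is independent of $\delta_d$, then for every event $(\vec{\delta},\vec{\alpha})$ with $\delta_d = 1$ the companion event $(\vec{\delta}^-,\vec{\alpha})$ obtained by flipping $\delta_d$ to $0$ satisfies $\hat{\beta}_D(\vec{\delta}^-) \preceq_D \hat{\beta}_D(\vec{\delta})$ (by the unit axiom for $1_{\otimes_D}$, its minimality under $\preceq_D$, and monotonicity of $\otimes_D$), with identical attack response, and therefore dominates the original; the analogous statement for attack variables uses the attacker's own minimality of $1_{\otimes_A}$. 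With this lemma, the Pareto-optimal events of $T$ are exactly those supported on the BDD-visible variables, so applying the invariant at $w = R_B$ gives $\texttt{BDD}_\texttt{BU}(T,B,R_B) = P_{R_B} = \underline{\min}_{\sqsubseteq}\hat{\beta}(\mathcal{S}) = \text{PF}(T)$, completing the proof.
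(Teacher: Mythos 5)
Your proof is correct and follows essentially the same route as the paper's: induction over the ROBDD, with leaf base cases, Shannon expansion on the root label, singleton fronts (first coordinate $1_{\otimes_D}$) for attacker-labeled sub-BDDs, and the $\beta_D$-shifted union plus re-minimization at defense-labeled nodes. The only differences are bookkeeping: the paper avoids your explicit don't-care lemma by generalizing $\operatorname{PF}$ to arbitrary Boolean functions over the full variable set $\mathcal{D}\cup\mathcal{A}$ (so cofactors keep all remaining variables), and it cites \cite{Lopuha_Efficient_2023} for the attacker-only case that you prove inline.
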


The complexity is $\mathcal{O}(|W|p^2)$, where $W$ is the set of nodes of the BDD, and $p$ is the maximal size of the Pareto fronts involved in the computation. Both are worst-case exponential in the size of the ADT (see Fig.~\ref{fig:maxpf}); however, we expect both to be reasonably small in practice, resulting in fast computation. We will test this expectation in the experiments.

\section{Experiments} \label{Experiments}

We empirically evaluate our methods, using a case study adapted from \cite{Kordy2018}, as well as 120 randomly generated ADTs.

\subsection{Case Study: Money Theft} \label{ssec:case}

We apply our methods to a real-world ADT presented by Kordy and Wideł \cite{Kordy2018}  modelling a money theft scenario, either via stealing someone's card and using an ATM, or via online banking (see Figure~\ref{fig:kordy}). That work assigns a unitless value to each BAS, representing the attacker's cost of performing that action. We also assign cost values to the BDS.

As mentioned in Section \ref{RelatedW}, existing work does not compute the Pareto front between attack and defense costs. Instead, the minimal attack cost discussed in \cite{Kordy2018} only considers attacks that cannot be prevented. This amounts to giving the defender infinite budget, and to just one point on the Pareto front.

This is a DAG-shaped ADT, since \emph{Phishing} has two parents. As such, we cannot apply our bottom-up approach directly. Instead, we assume that \emph{Phishing} needs to be performed twice in order to activate both \emph{Get Password} and \emph{Get username}. This turns the ADT into a tree-shaped one, and we can perform the \texttt{Bottom-Up} algorithm. This is inscribed in red in Fig.~\ref{fig:kordy}. As we can see, the final Pareto Front contains of only 3 pairs: $(0,90)$ is the cheapest attack on \emph{Via ATM}. The defender can thwart this via \emph{Cover Keypad}; if they do this, the attacker instead takes the cheapest attack on \emph{Via Online Banking}, represented by the pair $(30,150)$. If the defender furthermore activates \emph{SMS Authentification}, it is most advantageous for the attacker to attack \emph{Via ATM} again, disabling the defender's action using \emph{Camera}. This is represented by the Pareto-optimal pair $(50,165)$. Note that the BDS \emph{Strong Pwd} is not part of any Pareto-optimal point, suggesting that this action does not help the defender and should be avoided. In \cite{Kordy2018} the outcome of the analysis is 165, which corresponds to our final Pareto-optimal pair; thus our analysis gives a more complete picture of the interplay between attacker and defender, by showing the defender the effect of varying their budget on overall security.

We also compute the Pareto front using $\texttt{BDD}_{\texttt{BU}}$, which accurately analyzes the DAG-shaped ADT. We get the Pareto-optimal pairs $(0,80)$, $(20,90)$, $(50,140)$. Again, we see that the Pareto front is much smaller than the exponential upper bound would suggest. The optimal strategies are different to the tree-shaped case: here $\{$\emph{Phishing}, \emph{Log In \& Execute Transfer}$\}$ is optimal if the defender has no budget. 140 is also the metric computed in \cite{Kordy2018} under so-called \emph{set semantics}, which model DAG-shaped ADTs. Again existing work only gives a single value, instead of the whole Pareto front.
Although this case study demonstrates the applicability of our method, it abstracts away factors like dynamic system behavior, uncertainty in metrics, and organizational constraints, which would need to be addressed when scaling to more complex, real-world scenarios.

\begin{figure*}
  \begin{center}
    \includegraphics[width=0.90\textwidth]{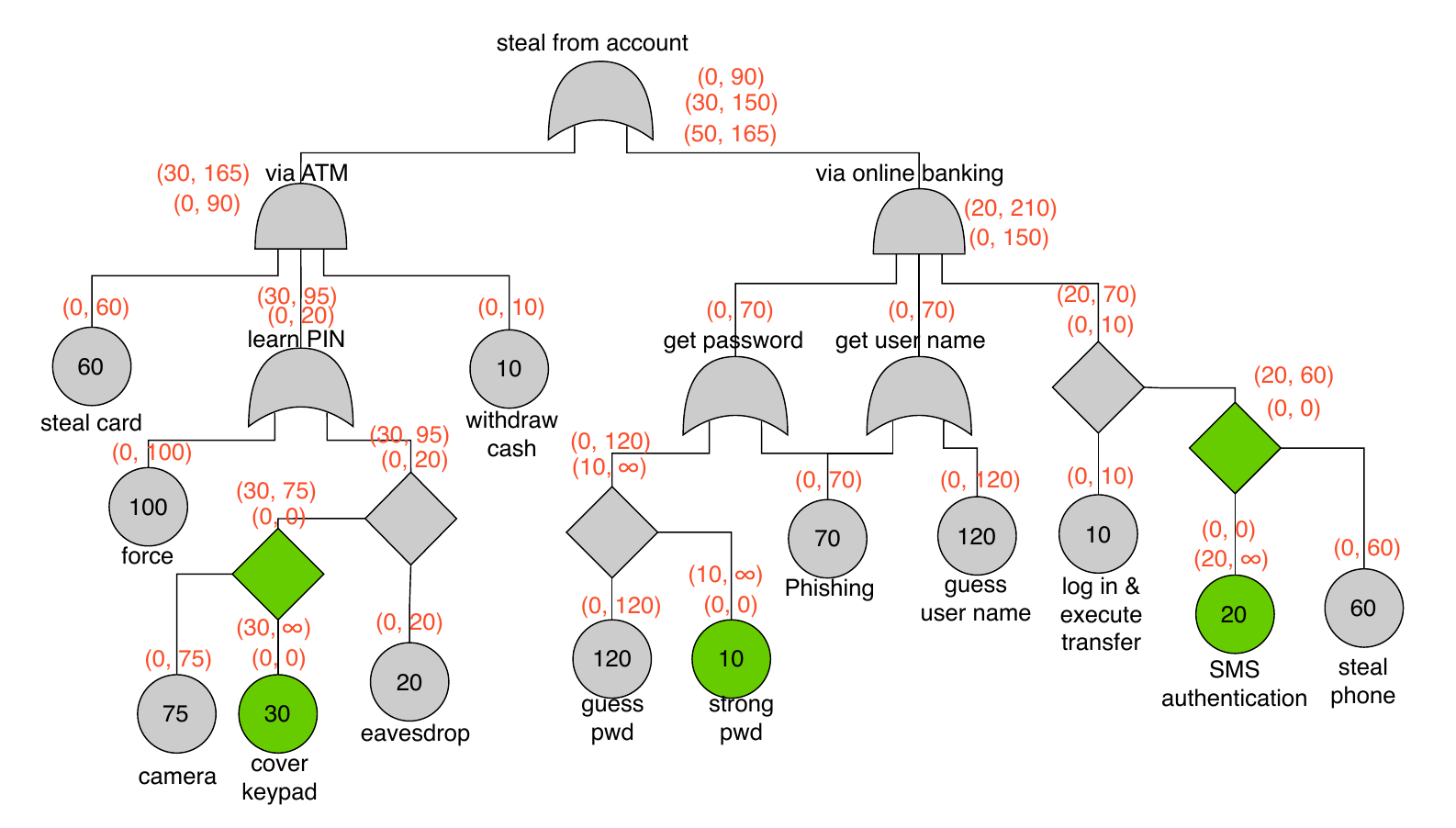}
  \end{center}
  \vspace{-1em}
    \caption{Attack-defense tree representing a money theft scenario, adapted from \cite{Kordy2018}. Values inscribed in BAS/BDS are attacker/defender costs; red values are the Pareto fronts at every node, computed \texttt{Bottom-Up}.}
    \label{fig:kordy}
\end{figure*}

\begin{figure}
  \begin{center}
    \includegraphics[width=0.40\textwidth]{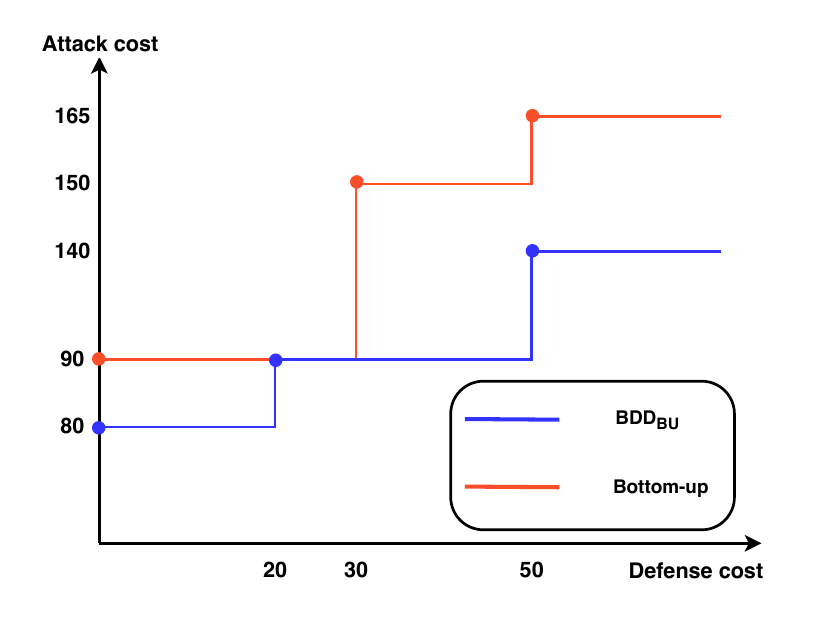}
  \end{center}
  \vspace{-1.5em}
    \caption{Pareto front for the ADT of Fig.~\ref{fig:kordy}, both under \texttt{Bottom-Up} and $\texttt{BDD}_{\texttt{BU}}$  analysis.}
    \label{fig:kordyPF}
    \vspace{-1.5em}
\end{figure}

\subsection{Synthetic ADTs} \label{ssec:synth}

The performance of \texttt{BU} and $\texttt{BDD}_{\texttt{BU}}$ is evaluated against the \texttt{Naive} approach. The \texttt{Naive} algorithm computes $\rho(\vec{\delta})$ for each defense vector $\vec{\delta}$ by iterating over all possible attacks and selecting the one with the minimum metric value, followed by reducing the resulting value pairs to the Pareto front using $\underline{\min}_{\sqsubseteq}$. 
Since existing work does not account for the interplay between attacker and defender metrics, direct comparisons with prior literature are not feasible. 

One approach to overcome this is to annotate the existing ADTs found in the literature with cost values for the defender. We were able to find several ADTs with $25 \leq |N| \leq 50$ \cite{adt_homeland_security, DontMissTheForest2014} but only a limited number of ADTs with $50 \leq |N| \leq 100$ \cite{Fraile2016, Bagnato2012, adt_online_banking}. Finding ADTs with $|N| \geq 100$ can be challenging as they are typically not made public for confidentiality reasons \cite{Paul_2014}. 

In practice, the size of attack trees can range from a few dozen to several hundred nodes \cite{vigo-automated-generation}. Given the relatively small number and size of ADTs found in the literature, we do not consider this to be a large enough testing suite to evaluate algorithms. Consequently, it is necessary to synthetically generate ADTs. Two common techniques for generating ATs are combining literature trees into a single one \cite{Lopuha2023_AttackTimeAnalysis}, or generating random ATs from scratch. We focused on the latter to cover a wider range of scenarios and create a more robust test suite. 

A risk analysis algorithm that takes several days may be feasible for some applications. However, within the context of this work, given the hardware limitations and restricted time to conduct experiments, we limit our testing scope by not pursuing computations that take more than $10^4$ seconds.

All experiments were performed on a machine with an Intel Core i5-12600K 3.7Ghz processor and 16GB of RAM. The algorithms are implemented in Python 3.12. Although faster BDD run times could perhaps be achieved using a C implementation such as in Sylvan \cite{dijk2016sylvan} or CUDD \cite{somenzi2015cudd}, we opted to maintain a consistent testing environment for all algorithms. The code and results are available on GitHub. \footnote{\url{https://github.com/dvcopae/thesis_adtrees}}.

We employ a large number of randomly generated ADTs for a statistically significant comparison of the algorithms' performance. After setting a maximum number of children $n$, nodes with random proprieties (gate type, attack/defense type, number of children) are recursively generated until the tree contains $n$ nodes (see the Appendix
). This approach naturally creates tree- and DAG-structured ADTs. 

Fig. \ref{fig:scatter} presents pairwise comparisons between the algorithms. A summary of these comparisons is given in \autoref{fig:pairwise-summary}. Algorithm runtimes across all random graphs are aggregated by taking the median at each interval of $|N| = 20$. Since the run time of \texttt{Naive} increases at an exponential rate, the values at the end of the interval will be drastically different than those at the beginning. To better represent the central tendency of the interval, the median is used instead of the average. 

Regarding the \texttt{Naive} algorithm, it is notable that for certain small-sized trees with fewer than 50 nodes, it surprisingly outperforms $\texttt{BDD}_{\texttt{BU}}$. We hypothesize that this is because, at a very small number of nodes, the time required to construct the BDD model constitutes a significant proportion of the total run time. However, as the number of nodes increases, the \texttt{Naive} algorithm approaches a runtime of $10^4$ seconds, even for some trees with less than 50 nodes. This algorithm has the slowest performance among the considered methods.

\begin{figure}[t]
    \centering
    \begin{minipage}{0.7\columnwidth}
        \centering
        \includegraphics[width=\textwidth]{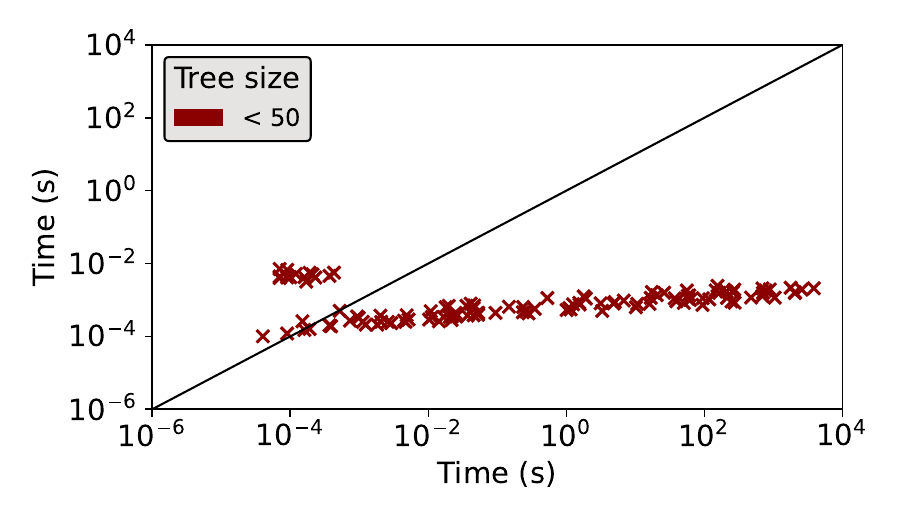}
        \caption*{\footnotesize (a) \texttt{Naive}, $\texttt{BDD}_{\texttt{BU}}$}
    \end{minipage}
    \hfill
    \begin{minipage}{0.7\columnwidth}
        \centering
        \includegraphics[width=\textwidth]{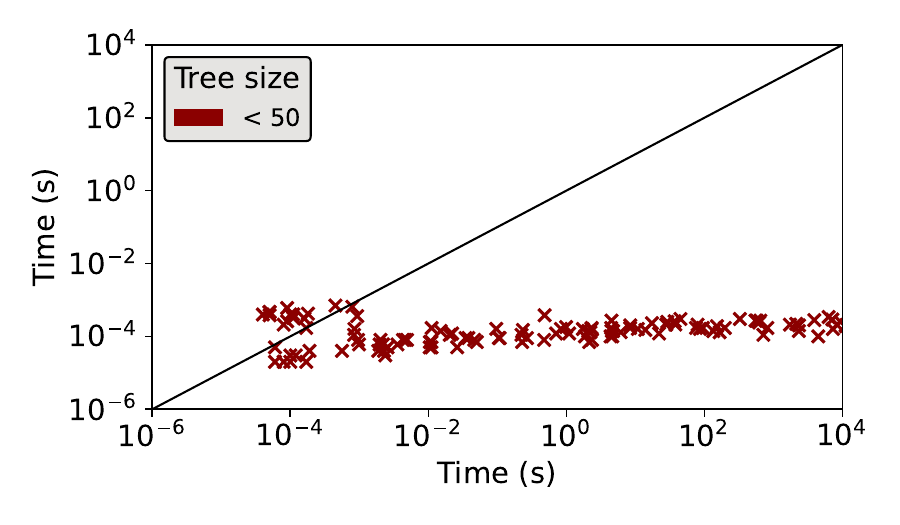}
        \caption*{\footnotesize (c) \texttt{Naive}, \texttt{BU}}
    \end{minipage}
    \hfill
    \begin{minipage}{0.7\columnwidth}
        \centering
        \includegraphics[width=\textwidth]{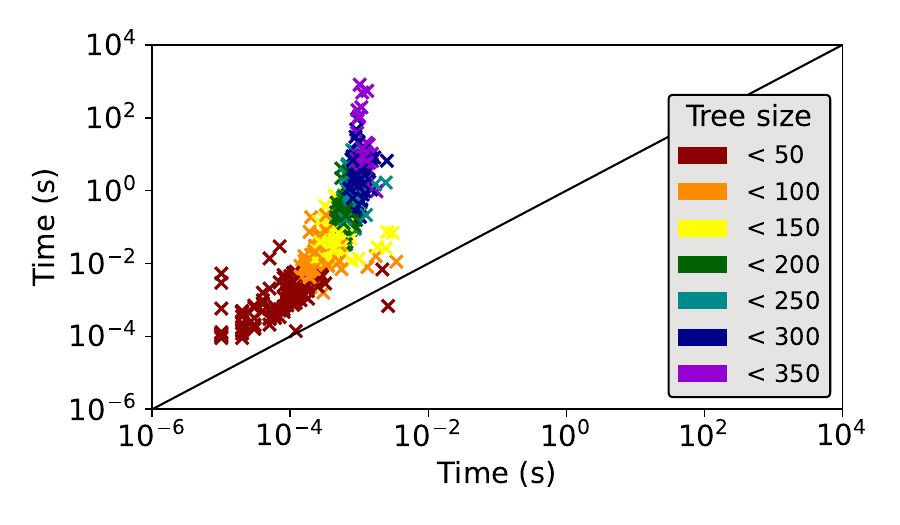}
        \caption*{\footnotesize (c) \texttt{BU}, $\texttt{BDD}_{\texttt{BU}}$}
    \end{minipage}
    
    \caption{ The first algorithm is the vertical axis while the second is the horizontal axis. The run time is in seconds, and each $\times$ represents a random ADT. For plots involving \texttt{BU}, only tree-structured ADTs are generated. Random ADTs were adjusted in size and number to ensure no run time exceeds $10^4$ seconds. All plots are based on 120 random ADTs with $|N| < 45$.}
    \label{fig:scatter}
    \vspace{-1.1em}
\end{figure}

\begin{figure}
    \centering
    \includegraphics[width=\columnwidth]{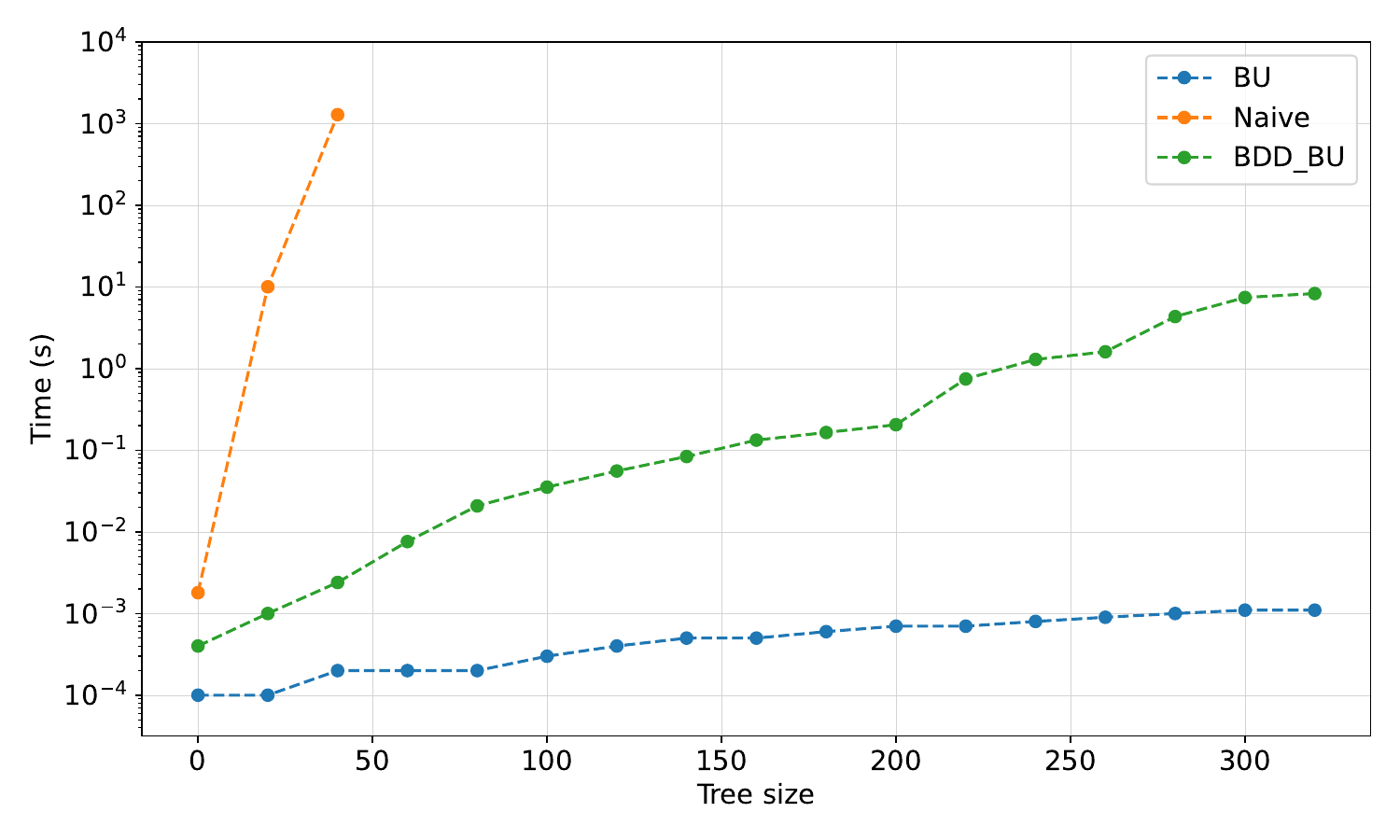}
    \caption{Summary of all the pairwise comparisons. The vertical axis represents the median time in seconds of each algorithm for ADTs grouped by the number of nodes $|N|$ at intervals of size 20. }
    \label{fig:pairwise-summary}
    \vspace{-1.1em}
\end{figure}

As both \texttt{BU} and $\texttt{BDD}_{\texttt{BU}}$ are quite fast, we extended our analysis for larger trees with up to 325 nodes. Remarkably, while the performance gap between the two remains consistent for trees with fewer than 100 nodes, this drastically changes as trees gain more nodes. For trees ranging from 300 to 325 nodes $\texttt{BDD}_{\texttt{BU}}$ requires approximately $10^3$ seconds, whereas $\texttt{BU}$ roughly $10^{-3}$ seconds.

In summary, this approach indicates that \texttt{BU} computes the Pareto Front the fastest for tree-structured ADTs, while $\texttt{BDD}_{\texttt{BU}}$ is the most efficient for DAGs. Furthermore, computation time remains low (below 10s) even for DAG-shaped ADTs an order of magnitude larger than our case study of Section \ref{ssec:case}.
\section{Conclusion} \label{Conclusion}

In this paper, we proposed a novel framework that incorporates defense metrics into ADTs, and presented efficient algorithms for computing the Pareto front between defense and attack metrics. The highlights behind this framework include a formal syntax and semantic model for representing ADT with attacker and defender attribute domains. We delved into \texttt{Bottom-Up} analysis and Binary Decision Diagram methods for computing the Pareto Front of the defender's and attacker's cost. We evaluated the performance of these approaches on a test suite consisting of randomly generated ADTs of sizes up to 325 nodes and observed significant variations in the speed of the algorithms.
In scenarios where the ADT has a tree structure, \texttt{BU} performs by far the best, with an average processing time of less than 1 second, even for trees with several hundred nodes. On the other hand, in cases where the ADT has a DAG structure and the attribute domains are absorbing semirings, then $\texttt{BDD}_{\texttt{BU}}$ is the next fastest choice, capable of analyzing trees up to 325 nodes under 30 minutes. 

As future works, one possible extension is incorporating probabilistic failures, as explored by Aslanyan and Nielson \cite{Aslanyan}, to enable scenarios where attackers and defenders operate under uncertainty. Attack-Fault-Defense Trees (AFDTs), as introduced in \cite{afdt_paper}, could provide a suitable framework for such extensions.
Another possible extension is enhancing algorithm efficiency through modular decomposition techniques. By dividing large ADTs into smaller modules, the computational cost of analyzing complex systems could be significantly reduced. Furthermore, optimizing BDDs by identifying orderings that minimize their size while retaining the defense-first property remains an important challenge. Developing automated methods for finding such orderings would improve scalability and enable efficient analysis of larger ADTs.
Finally, an interesting avenue for future research is extending our approach to ADTs with dynamic behaviour, similar to dynamic attack trees \cite{Jhawar2015}. The time-dependency between BAS/BDS could be modelled by their relative position in the BDD variable order.


\balance
\bibliographystyle{IEEEtran}
\bibliography{bibliography}


\appendices

\section{Proof of Theorem \ref{thm:bu}}

We first prove an auxiliary lemma. Suppose that $T$ is a tree-shaped AADT whose root has two children; we call the subtrees spanned by these children $T_1$ and $T_2$. If $\mathcal{D}_i$ is the set of BDS in $T_i$, then $\mathcal{D}_1 \cap \mathcal{D}_2 = \varnothing$ due to the tree-shaped property, and $\mathcal{D}_1 \cup \mathcal{D}_2 = \mathcal{D}$. Hence $\mathbb{B}^{\mathcal{D}} \cong \mathbb{B}^{\mathcal{D}_1} \times \mathbb{B}^{\mathcal{D}_2}$. Thus we can uniquely write each element $\vec{\delta} \in \mathbb{B}^{\mathcal{D}}$ as $(\vec{\delta}_1,\vec{\delta}_2)$, with $\vec{\delta}_i \in \mathbb{B}^{\mathcal{D}_i}$. In this terminology, we have the following lemma:

\begin{lemma}
Let $T$ be a tree-shaped AADT whose root has two children; the ADTs spanned by these children are called $T_1$ and $T_2$, respectively. Let $\mathcal{D}_i$ be the set of BDS in $T_i$. Furthermore, let $\rho_1$, $\rho_2$ be the optimal attack response functions of $T_1$ and $T_2$, respectively. Then for all $\vec{\delta}_i \in \mathbb{B}^{\mathcal{D}_i}$,
\begin{equation*}
\hat{\beta}_A(\rho(\vec{\delta}_1,\vec{\delta}_2)) = \hat{\beta}_A(\rho_1(\vec{\delta}_1)) \circ_A \hat{\beta}_A(\rho_2(\vec{\delta}_2)).
\end{equation*}
\end{lemma}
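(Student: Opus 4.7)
The plan is to proceed by case analysis on the pair $(\gamma(R_T), \tau(R_T))$, which via Table \ref{tab:operators-applied} pins down $\circ_A$. The main leverage is the tree-shaped hypothesis: the sets $\mathcal{D}_1,\mathcal{D}_2$ and $\mathcal{A}_1,\mathcal{A}_2$ partition $\mathcal{D}$ and $\mathcal{A}$, so every attack vector decomposes uniquely as $\vec{\alpha} = (\vec{\alpha}_1,\vec{\alpha}_2)$, and by commutativity and associativity of $\otimes_A$ we get $\hat{\beta}_A(\vec{\alpha}) = \hat{\beta}_A(\vec{\alpha}_1) \otimes_A \hat{\beta}_A(\vec{\alpha}_2)$. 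Moreover, for any non-root $v \in T_i$, $f_T(\vec{\delta},\vec{\alpha},v)$ depends only on $(\vec{\delta}_i,\vec{\alpha}_i)$, so I can rewrite the attacker-success condition at $R_T$ as a Boolean combination of the attacker-success conditions on $T_1$ and $T_2$.

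Concretely, I would apply Definition \ref{def:structure-function} at $R_T$ and then flip with the polarity $\tau(R_T)$ to obtain the attacker-success condition at the root. For attacker \texttt{AND} and defender \texttt{OR}, this yields a conjunction (``attacker must succeed on both subtrees''); the two minimizations decouple by monotonicity of $\otimes_A$, giving $\hat{\beta}_A(\rho_1(\vec{\delta}_1)) \otimes_A \hat{\beta}_A(\rho_2(\vec{\delta}_2))$. For attacker \texttt{OR} and defender \texttt{AND}, the condition becomes a disjunction; the optimum is attained by committing to one side and leaving the other at $\vec{0}$, whose metric value is $1_{\otimes_A}$ (the unit of $\otimes_A$ and minimal), so the minimum over the two choices is $\hat{\beta}_A(\rho_1(\vec{\delta}_1)) \oplus_A \hat{\beta}_A(\rho_2(\vec{\delta}_2))$. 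The two \texttt{INH} cases are analogous once I note $\tau(\theta(R_T)) = \tau(R_T)$ while $\tau(\bar\vartheta(R_T)) \neq \tau(R_T)$, so the rule $f_T = f_{\theta} \wedge \neg f_{\bar\vartheta}$ becomes a conjunction of attacker-success conditions when the root is attacker and a disjunction when the root is defender, matching the $\otimes_A$ and $\oplus_A$ entries of the table.

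The main obstacle is handling the infeasibility sentinel: when $\rho_i(\vec{\delta}_i) = \times$ one has $\hat{\beta}_A(\rho_i(\vec{\delta}_i)) = 1_{\oplus_A}$, and I must check that the right-hand side of the lemma still agrees with the left. In the disjunctive cases this is immediate because $1_{\oplus_A}$ is the unit of $\oplus_A$. In the conjunctive cases I rely on $1_{\oplus_A} \otimes_A x = 1_{\oplus_A}$, i.e., that $1_{\oplus_A}$ is absorbing for $\otimes_A$; this is precisely the ``absorbing semiring'' property referenced after the definition of semiring attribute domain \cite{Lopuha_Efficient_2023}. I would also verify that in the conjunctive cases infeasibility on either subtree is equivalent to infeasibility at the root, so $\rho(\vec{\delta}_1,\vec{\delta}_2) = \times$ exactly when the identity $1_{\oplus_A} \otimes_A \hat{\beta}_A(\rho_{3-i}(\vec{\delta}_{3-i})) = 1_{\oplus_A}$ is invoked. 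Modulo these bookkeeping checks the six case computations are routine, and the lemma follows.
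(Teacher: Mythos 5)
Your proposal is correct and follows essentially the same route as the paper's proof: decompose each attack vector as $(\vec{\alpha}_1,\vec{\alpha}_2)$ using tree-shapedness, reduce the root success condition to a conjunction or disjunction of subtree success conditions, and then either decouple the minimizations via monotonicity of $\otimes_A$ (conjunctive cases) or argue via committing to one side with $\vec{0}$ on the other (disjunctive cases, exactly the two-inequality argument the paper spells out for $\gamma(R_T)=\mathtt{OR}$, $\tau(R_T)=\mathtt{A}$). Your explicit treatment of the sentinel $\rho_i(\vec{\delta}_i)=\times$, using that $1_{\oplus_A}\otimes_A x=1_{\oplus_A}$ (which indeed follows from maximality of $1_{\oplus_A}$, minimality of the unit, and monotonicity), is a bookkeeping point the paper leaves implicit, but it does not change the approach.
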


\begin{proof}
We prove this statement for the case $\gamma(R_T) = \mathtt{OR}$, $\tau(R_T) = \mathtt{A}$; the other cases are similar. We first show $\hat{\beta}_A(\rho(\vec{\delta}_1,\vec{\delta}_2)) \preceq_A \hat{\beta}_A(\rho_1(\vec{\delta}_1)) \oplus_A \hat{\beta}_A(\rho_2(\vec{\delta}_2))$. To see this, note that since $R_T$ is an \texttt{OR}-gate, we have
\begin{align*}
&f_T((\rho_1(\vec{\delta}_1),\vec{0}),(\vec{\delta}_1,\vec{\delta}_2),R_T) \\
&= f_{T_1}(\rho_1(\vec{\delta}_1),\vec{\delta}_1,R_{T_1}) \vee f_{T_2}(\vec{0},\vec{\delta}_2,R_{T_2}) \\
&= 1 \vee f_{T_2}(\vec{0},\vec{\delta}_2,R_{T_2}) \\
&= 1.
\end{align*}
By definition of $\rho(\vec{\delta}_1,\vec{\delta}_2)$, this means that
\begin{align*}
\hat{\beta}_A(\rho(\vec{\delta}_1,\vec{\delta}_2)) &\preceq_A \hat{\beta}_A(\rho_1(\vec{\delta}_1),\vec{0}) \\
&= \hat{\beta}_A(\rho_1(\vec{\delta}_1)).
\end{align*}
Analogously we can show $\hat{\beta}_A(\rho(\vec{\delta}_1,\vec{\delta}_2)) \preceq_A \hat{\beta}_A(\rho_2(\vec{\delta}_2))$; since $\oplus_A$ selects the minimum w.r.t. $\preceq_A$, this implies $\hat{\beta}_A(\rho(\vec{\delta}_1,\vec{\delta}_2)) \preceq_A \hat{\beta}_A(\rho_1(\vec{\delta}_1)) \oplus_A \hat{\beta}_A(\rho_2(\vec{\delta}_2))$.

Next, we show $\hat{\beta}_A(\rho_1(\vec{\delta}_1)) \oplus_A \hat{\beta}_A(\rho_2(\vec{\delta}_2)) \preceq_A \hat{\beta}_A(\rho(\vec{\delta}_1,\vec{\delta}_2))$. Let $\rho(\vec{\delta}_1,\vec{\delta}_2) = (\vec{\alpha}_1,\vec{\alpha}_2)$. Then by definition,
\begin{align*}
1 &= f_T((\vec{\alpha}_1,\vec{\alpha}_2),(\vec{\delta}_1,\vec{\delta}_2),R_T) \\
&= f_{T_1}(\vec{\alpha}_1,\vec{\delta}_1,R_{T_1}) \vee f_{T_2}(\vec{\alpha}_2,\vec{\delta}_2,R_{T_2}).
\end{align*}
Without loss of generality, assume $f_{T_1}(\vec{\alpha}_1,\vec{\delta}_1,R_{T_1})= 1$. Then
\begin{align*}
\hat{\beta}_A(\rho_1(\vec{\delta}_1)) &\preceq_A \hat{\beta}_A(\vec{\alpha}_1) \\
&= \bigotimes_{i: \alpha_{1,i}= 1} \beta(a_i) \\
&\preceq_A \bigotimes_{i: \alpha_{1,i}= 1} \beta(a_i) \otimes\bigotimes_{i: \alpha_{2,i}= 1} \beta(a'_i) \\
&= \hat{\beta}_A(\vec{\alpha}_1,\vec{\alpha}_2) \\
&= \hat{\beta}_A(\rho(\vec{\delta}_1,\vec{\delta}_2).
\end{align*}
Here the $a_i$ are the BAS of $T_1$, while the $a'_i$ are the BAS of $T_2$. This proves $\hat{\beta}_A(\rho_1(\vec{\delta}_1)) \oplus_A \hat{\beta}_A(\rho_2(\vec{\delta}_2)) \preceq_A \hat{\beta}_A(\rho(\vec{\delta}_1,\vec{\delta}_2))$, which together with the previous inequality proves equality.
\end{proof}

We also need a second lemma to show that the semiring operations behave nicely with respect to $\sqsubseteq$.

\begin{lemma} \label{lem:pf}
Let $\circ_A$ be either $\otimes_A$ or $\oplus_A$, and let $\psi$ be the binary operation on $V_D \times V_A$ given by $\psi((x,y),(x',y')) = (x \otimes_D x',y \circ_A y')$. Then for any two subsets $X, X' \subseteq V_D \times V_A$ we have
\[
\underline{\min}_{\sqsubseteq}(\psi(X,X')) = \underline{\min}_{\sqsubseteq}(\psi(\underline{\min}_{\sqsubseteq}(X),X')).
\]
\end{lemma}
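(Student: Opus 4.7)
The plan is to prove both inclusions by exploiting a monotonicity property of $\psi$ with respect to the Pareto order $\sqsubseteq$. First I would verify the following monotonicity lemma: if $(x,y) \sqsubseteq (x'',y'')$, then $\psi((x,y),(x',y')) \sqsubseteq \psi((x'',y''),(x',y'))$. For the first coordinate this is the built-in monotonicity of $\otimes_D$ with respect to $\preceq_D$. For the second coordinate, if $\circ_A = \otimes_A$ this is again the semiring axiom applied w.r.t. $\preceq_A$; if $\circ_A = \oplus_A = \min_{\preceq_A}$, a short case analysis using the linearity of $\preceq_A$ shows that $y \succeq_A y''$ implies $\min_{\preceq_A}(y,y') \succeq_A \min_{\preceq_A}(y'',y')$. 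I would also record the standard cover property: since $X$ is finite, for every $x \in X$ there exists $x^* \in \underline{\min}_{\sqsubseteq}(X)$ with $x^* \sqsubseteq x$, obtained by iteratively replacing $x$ by a strictly dominating element until the process terminates.

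For the inclusion $\subseteq$, take $z \in \underline{\min}_{\sqsubseteq}(\psi(X,X'))$ and write $z = \psi((a,b),(a',b'))$ with $(a,b) \in X$ and $(a',b') \in X'$. By the cover property choose $(a^*,b^*) \in \underline{\min}_{\sqsubseteq}(X)$ with $(a^*,b^*) \sqsubseteq (a,b)$. Monotonicity yields $\psi((a^*,b^*),(a',b')) \sqsubseteq z$, and since this element also lies in $\psi(X,X')$ while $z$ is Pareto-optimal there, antisymmetry of $\sqsubseteq$ forces equality. Hence $z \in \psi(\underline{\min}_{\sqsubseteq}(X),X')$, and Pareto-optimality of $z$ in the larger set $\psi(X,X')$ transfers to the subset, so $z \in \underline{\min}_{\sqsubseteq}(\psi(\underline{\min}_{\sqsubseteq}(X),X'))$.

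For the reverse inclusion, take $z \in \underline{\min}_{\sqsubseteq}(\psi(\underline{\min}_{\sqsubseteq}(X),X'))$; clearly $z \in \psi(X,X')$. Suppose for contradiction that $z$ is not Pareto-optimal in $\psi(X,X')$, so some $z' = \psi((a,b),(a',b'))$ with $(a,b) \in X$ satisfies $z' \sqsubsetneq z$. Applying the cover property to $(a,b)$ and then monotonicity gives $(a^*,b^*) \in \underline{\min}_{\sqsubseteq}(X)$ with $\psi((a^*,b^*),(a',b')) \sqsubseteq z'$; combining with $z' \sqsubsetneq z$ and antisymmetry yields $\psi((a^*,b^*),(a',b')) \sqsubsetneq z$, contradicting Pareto-optimality of $z$ in $\psi(\underline{\min}_{\sqsubseteq}(X),X')$.

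The main obstacle is verifying the monotonicity lemma uniformly in the two choices of $\circ_A$: the $\oplus_A$ case is not covered by a semiring axiom and requires one to unfold the $\min_{\preceq_A}$ definition and appeal to linearity of $\preceq_A$. Once this monotonicity is in place, both inclusions follow routinely from the cover property and the definition of Pareto-optimality, and the argument is symmetric enough that it would extend without change to more general operators $\circ_D$ satisfying analogous monotonicity.
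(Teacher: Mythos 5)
Your proof is correct and follows essentially the same route as the paper's: establish monotonicity of $\psi$ in its first argument with respect to $\sqsubseteq$ (handling both the $\otimes_A$ and $\oplus_A$ cases) and then derive the two inclusions from the definition of Pareto optimality. The only notable difference is that you make explicit the cover property (every element of a finite $X$ is dominated by some element of $\underline{\min}_{\sqsubseteq}(X)$), which the paper's proof uses only implicitly; this finiteness/well-foundedness assumption is genuinely needed for the lemma as stated, and it holds in the paper's application since the sets are images of finite Boolean cubes.
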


\begin{proof}
Either way we have that $y \preceq_A y'$ implies $y''  \circ_A y \preceq_A y'' \circ_A y'$. We now first prove 
\begin{equation} \label{eq:BUpf1}
\underline{\min}_{\sqsubseteq}(\psi(X,X')) \subseteq \underline{\min}_{\sqsubseteq}(\psi(\underline{\min}_{\sqsubseteq}(X),X')).
\end{equation}
Let $(x,y) \in X$ and $(x',y') \in X'$ be such that $\psi((x,y),(x',y'))$ is Pareto optimal in $\psi(X,X')$. We first prove that $\psi((x,y),(x'y')) \in \psi(\underline{\min}_{\sqsubseteq}(X),X')$. Let $(x'',y'') \in X$ be such that $(x'',y'') \sqsubseteq (x,y)$, i.e., $x'' \preceq_D x$ and $y \preceq_A y''$. Then $x'' \otimes_D x' \preceq_D x \otimes_D x'$ and $y \circ_A y' \preceq_A \circ y'' \circ_A y'$; hence
\[
\psi((x'',y''),(x',y')) \sqsubseteq \psi((x,y),(x',y').
\]
Since the RHS is Pareto optimal in $\psi(X,X')$, equality must hold; hence $\psi((x,y),(x'y')) \in \psi(\underline{\min}_{\sqsubseteq}(X),X')$. Furthermore, it cannot be dominated by elements of $\psi(\underline{\min}_{\sqsubseteq}(X),X')$ as it is not dominated by elements of $\psi(X,X')$; hence $\psi((x,y),(x',y')) \in \underline{\min}_{\sqsubseteq}(\psi(\underline{\min}_{\sqsubseteq}(X),X'))$, and equality holds in \eqref{eq:BUpf1}. Next, we prove
\begin{equation} \label{eq:BUpf2}
\underline{\min}_{\sqsubseteq}(\psi(X,X')) \supseteq \underline{\min}_{\sqsubseteq}(\psi(\underline{\min}_{\sqsubseteq}(X),X')).
\end{equation}
Let $(x,y) \in \underline{\min}_{\sqsubseteq}(X)$ and let $(x',y') \in X'$ be such that $\psi((x,y),(x',y'))$ is Pareto optimal in $\psi(\underline{\min}_{\sqsubseteq}(X),X')$. Clearly $\psi((x,y),(x',y')) \in \psi(X,X')$; we need to show that it is Pareto optimal there. Let $(z,w) \in X$, $(z',w') \in X'$ be such that $\psi((z,w),(z',w')) \in \underline{\min}_{\sqsubseteq}(\psi(X,X'))$ and such that
\begin{equation} \label{eq:BUpf3}
\psi((z,w),(z',w')) \sqsubseteq \psi((x,y),(x',y')).
\end{equation}
By \eqref{eq:BUpf1} we have that $\psi((z,w),(z',w'))$ is Pareto optimal in $\psi(\underline{\min}_{\sqsubseteq}(X),X'))$. This also holds for $\psi((x,y),(x',y'))$. Hence equality holds in \eqref{eq:BUpf3}, which in turn implies that \eqref{eq:BUpf2} holds. This completes the proof.
\end{proof}

\begin{proof}[Proof of Theorem \ref{thm:bu}]
We prove by induction that $\mathtt{BU}(T,v) = \operatorname{PF}(T_v)$ for all nodes $v$ in $T$. The statement is clear if $v$ is a leaf. Now suppose $v$ has two children $v_1$ and $v_2$, for which the induction hypothesis holds. Then
\begin{align*}
\mathtt{BU}(T,v) &= \underline{\min}_{\sqsubseteq}(\psi(\mathtt{BU}(T,v_1),\mathtt{BU}(T,v_2)))\\
&= \underline{\min}_{\sqsubseteq}(\psi(\underline{\min}_{\sqsubseteq}(\hat{\beta}(\mathbb{B}^{\mathcal{D}_1})),\underline{\min}_{\sqsubseteq}(\hat{\beta}(\mathbb{B}^{\mathcal{D}_1}))))\\
&= \underline{\min}_{\sqsubseteq}(\psi(\hat{\beta}(\mathbb{B}^{\mathcal{D}_1}),\hat{\beta}(\mathbb{B}^{\mathcal{D}_1})))\\
&= \underline{\min}_{\sqsubseteq}(\hat{\beta}(\mathbb{B}^{\mathcal{D}})\\
&= \operatorname{PF}(T_v).
\end{align*}
Here we use Lemma \ref{lem:pf} twice in the third $=$.
\end{proof}

\section{Proof of Theorem \ref{thm:bdd}}

We will prove this theorem for $\tau(R_T) = \mathtt{A}$; the case $\tau(R_T) = \mathtt{D}$ is proven analogously. Note that the definition of $\rho$, and with it those of $\hat{\beta}$ and $\operatorname{PF}(T)$, depend on $T$ only via $f_T$. Therefore, we can define $\operatorname{PF}(f)$ for any Boolean function $f$ whose variables are partitioned into $\mathcal{D} \cup \mathcal{A}$. Furthermore, for the algorithm $\mathtt{BDD_{BU}}$ we do not need the full AADT; we just need $D_{\mathcal{D}},D_{\mathcal{D}},\beta_D,\beta_A$ and $\tau(R_T)$. All other information is stored in the structure function $f_T$, which is the same as the function $f$. Consider $D_{\mathcal{D}},D_{\mathcal{D}},\beta_D,\beta_A$ and $\tau(R_T) = \mathtt{A}$ fixed, and write $\mathtt{Z}(B,w)$ for the algorithm that performs $\mathtt{BDD_{BU}}(T,B,w)$ for an appropriate AADT $T$. Then it suffices to prove the following theorem:

\begin{theorem}
Let $f$ be a Boolean function with variables $\mathcal{D} \cup \mathcal{A}$, and let $<$ be a linear order on the variables such that $d < a$ for all $d \in \mathcal{D}$ and $a \in \mathcal{A}$. Let $B$ be the ROBDD with variable ordering $<$ representing $f$. Then $\mathtt{Z}(B,R_B) = \operatorname{PF}(f)$.
\end{theorem}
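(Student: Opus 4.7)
The plan is to prove the equality by structural induction on the ROBDD $B$, from terminals toward the root. For every node $w$ of $B$, let $f_w$ denote the Boolean function computed by the sub-BDD rooted at $w$ (a function of the variables $\mathcal{D} \cup \mathcal{A}$ that does not depend on variables outside $\mathrm{vars}(w)$). I will show the invariant $\mathtt{Z}(B, w) = \operatorname{PF}(f_w)$, where $\operatorname{PF}(f_w)$ is defined exactly as in the paper but with $f$ replaced by $f_w$. Applied at $w = R_B$ this yields $\mathtt{Z}(B, R_B) = \operatorname{PF}(f_{R_B}) = \operatorname{PF}(f)$, since $f_{R_B} = f$.

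The base cases are immediate: if $Lab(w) = 0$, then $f_w$ is constantly false, so no event succeeds, $\rho(\vec{\delta}) = \times$ for all $\vec{\delta}$, and the unique Pareto point is $(1_{\otimes_D}, 1_{\oplus_A})$, matching Line 3; if $Lab(w) = 1$, then $f_w$ is constantly true, the attacker activates nothing, and the unique Pareto point is $(1_{\otimes_D}, 1_{\otimes_A})$, matching Line 5. For the inductive step with $Lab(w) = a \in \mathcal{A}$, the defense-first ordering ensures that no defense variable appears in the sub-BDD at $w$. Hence $\operatorname{PF}(f_w)$ has a single element with defense coordinate $1_{\otimes_D}$, and its attack coordinate equals the minimum attack cost required for $f_w = 1$. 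The attacker chooses between omitting $a$ (recursive cost $u_0$) and activating $a$ (cost $\beta_A(a) \otimes_A u_1$), so the optimum is $u_0 \oplus_A (\beta_A(a) \otimes_A u_1)$, which is exactly Line 9; correctness of $u_0, u_1$ follows from the induction hypothesis on $Low(w), High(w)$.

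For $Lab(w) = d \in \mathcal{D}$, by induction $P_0 = \operatorname{PF}(f_{Low(w)})$ and $P_1 = \operatorname{PF}(f_{High(w)})$. The defender at $w$ chooses between $\delta_d = 0$, contributing $P_0$, and $\delta_d = 1$, incurring cost $\beta_D(d)$ and contributing $\{(\beta_D(d) \otimes_D u, u') \mid (u,u') \in P_1\}$. Every feasible strategy under $f_w$ falls into exactly one of these families, so the Pareto front of $f_w$ equals the $\underline{\min}_{\sqsubseteq}$ of the union, matching Lines 13--14. Idempotence-style arguments analogous to Lemma \ref{lem:pf} make it legitimate to take $\underline{\min}_{\sqsubseteq}$ of already-reduced sub-fronts and obtain the correct global front; monotonicity of $\otimes_D$ in $\preceq_D$ (a semiring-attribute-domain axiom) is what makes the translation $(u,u') \mapsto (\beta_D(d) \otimes_D u, u')$ compatible with $\sqsubseteq$.

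The main obstacle is bookkeeping around variables that are ``skipped'' by ROBDD reduction: $\mathrm{vars}(Low(w))$ and $\mathrm{vars}(High(w))$ can each omit variables present in $\mathrm{vars}(w) \setminus \{Lab(w)\}$, and the global Pareto front lives over all of $\mathbb{B}^{\mathcal{D}}$. I will handle this with a small rationality lemma: if $f_w$ does not depend on some $d \in \mathcal{D}$, then activating $d$ strictly worsens the defense coordinate (by monotonicity and $1_{\otimes_D}$ being $\preceq_D$-minimal) without improving the attack coordinate, so $\delta_d = 0$ dominates $\delta_d = 1$; the analogous statement holds for attack variables. This lets me restrict $\operatorname{PF}(f_w)$ to vectors supported on $\mathrm{vars}(w)$ without loss, aligning the global Pareto front with the inductive computation and, at $w = R_B$, identifying $\operatorname{PF}(f_{R_B})$ with $\operatorname{PF}(f)$.
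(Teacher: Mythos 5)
Your proof is correct and follows essentially the same route as the paper's: structural induction on the ROBDD with Shannon expansion at the root variable, terminal base cases, and a Lemma~\ref{lem:pf}-style interchange of $\underline{\min}_{\sqsubseteq}$ with the combination step at defense-labelled nodes. The only differences are cosmetic: you argue the all-attack sub-BDD case inline where the paper cites \cite{Lopuha_Efficient_2023}, and you handle don't-care variables via an explicit dominance lemma (where ``strictly worsens'' should be ``weakly worsens'', e.g.\ for zero-cost defenses or $\otimes_D = \max$, which still suffices since the Pareto front is a set of value pairs), whereas the paper sidesteps this issue by stating the theorem for arbitrary Boolean functions over the ambient variable set.
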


\begin{proof}
We prove this by induction on the size of $B$. If $B$ has one node, then $f$ is a constant function, for which the theorem holds. If $Lab(R_B) \in \mathcal{A}$, then all variables in $B$ are in $\mathcal{A}$; in this case, the result directly follows from BDD-based analysis of semiring attack tree metrics \cite{Lopuha_Efficient_2023}. Now suppose $Lab(R_B) = d \in \mathcal{D}$. Let $B_0$ and $B_1$ be the ROBDDs spanned by the low and high child of $R_B$, respectively, representing functions $f_0$ and $f_1$ with associated functions $\rho_0,\rho_1,\hat{\beta}_0,\hat{\beta}_1$. Then $f \equiv (\neg d \wedge f_0) \vee (d \wedge f_1)$. Hence for $\vec{\delta} \in \mathbb{B}^{\mathcal{D}\setminus\{d\}}$ we get 
\begin{align*}
\hat{\beta}_D(\vec{\delta},0) &= \hat{\beta}_{D,0}(\vec{\delta}),\\
\hat{\beta}_D(\vec{\delta},1) &= \hat{\beta}_{D,1}(\vec{\delta}) \otimes_D \beta_D(d),\\
\rho(\vec{\delta},0) &= \rho_0(\vec{\delta}),\\
\rho(\vec{\delta},1) &= \rho_1(\vec{\delta}).
\end{align*}
Furthermore, write $\varphi(\vec{\delta}) := \hat{\beta}(\vec{\delta},\rho(\vec{\delta}))$; we define $\varphi_0$ and $\varphi_1$ likewise. Thus $\hat{\beta}(\mathcal{S}) = \varphi(\mathbb{B}^{\mathcal{D}})$. Then it follows that
\begin{align*} 
&\mathtt{Z}(B,R_B)\\
&= \underline{\min}_{\sqsubseteq}\Bigg(\mathtt{Z}(B_1,R_{B_1}) \\
& \ \ \cup \left\{(s \otimes_D \beta_D(d),t) \ \middle|\ (s,t) \in \mathtt{Z}(B_2,R_{B_2})\right\} \Bigg)\\
&= \underline{\min}_{\sqsubseteq}\Bigg(\underline{\min}_{\sqsubseteq}(\varphi_0(\mathbb{B}^{\mathcal{D}\setminus\{d\}})\\
& \ \ \cup \underline{\min}_{\sqsubseteq}\left\{(\hat{\beta}_{D,1}(\vec{\delta}) \otimes_D \beta_D(d),\hat{\beta}_A(\rho_1(\vec{\delta}))\ \middle|\ \vec{\delta} \in \mathbb{B}^{\mathcal{D}\setminus\{d\}} \right\} \Bigg)\\
&= \underline{\min}_{\sqsubseteq}\left(\underline{\min}_{\sqsubseteq}(\varphi(\mathbb{B}^{\mathcal{D}\setminus\{d\}} \times \{0\})) \cup \underline{\min}_{\sqsubseteq}(\varphi(\mathbb{B}^{\mathcal{D}\setminus\{d\}} \times \{1\})) \right)\\
&= \underline{\min}_{\sqsubseteq}(\varphi(\mathbb{B}^{\mathcal{D}})) = \underline{\min}_{\sqsubseteq}(\hat{\beta}(\mathcal{S})).
\end{align*}
which completes the proof by induction.
\end{proof}

\end{document}